\tikzstyle{every node}=[font=\small]
\newtheorem{thm}{Theorem}
\newtheorem{lem}[thm]{Lemma}
\newtheorem{cor}[thm]{Corollary}
\theoremstyle{definition}
\newtheorem{defn}[thm]{Definition}
\newtheorem{question}[thm]{Question}
\numberwithin{thm}{section}
\newcommand{\initcube}{
\tikzmath{
\posx =0; \posy = 0;
\x1 = 1; \y1 = 0; \z1 = 0; 
\x2 = 0; \y2 = 1; \z2 = 0; 
\x3 = 0; \y3 = 0; \z3 = 1;}
}
\newcommand{\rolldown}{
\tikzmath{
\posy = \posy - 1;
\t = \y1; \y1 = -\z1; \z1 = \t; 
\t = \y2; \y2 = -\z2; \z2 = \t; 
\t = \y3; \y3 = -\z3; \z3 = \t;}
}
\newcommand{\rollup}{
\tikzmath{\posy = \posy + 4;}\rolldown \rolldown \rolldown}
\newcommand{\flipdown}{
\tikzmath{
\posy = \posy - 1;
\t = \y1; \y1 = -\t; 
\t = \y2; \y2 = -\t; 
\t = \y3; \y3 = -\t;}
}
\newcommand{\flipup}{
\tikzmath{\posy = \posy + 2;} \flipdown} 
\newcommand{\rollright}{
\tikzmath{
\posx = \posx + 1;
\t = \x1; \x1 = -\z1; \z1 = \t; 
\t = \x2; \x2 = -\z2; \z2 = \t; 
\t = \x3; \x3 = -\z3; \z3 = \t; }
}
\newcommand{\rollleft}{\tikzmath{\posx = \posx -4;}\rollright \rollright \rollright}
\newcommand{\flipright}{
\tikzmath{
\posx = \posx + 1;
\t = \x1; \x1 = -\t; 
\t = \x2; \x2 = -\t; 
\t = \x3; \x3 = -\t;}
}
\newcommand{\flipleft}{\tikzmath{\posx = \posx - 2;}\flipright}
\newcommand{\printcube}{

\pgfmathtruncatemacro\a{\z3!=0?\x1:(\z1!=0?\x2:\x3)}
\pgfmathtruncatemacro\c{\z3!=0?\y1:(\z1!=0?\y2:\y3)}
\pgfmathtruncatemacro\b{\z3!=0?\x2:(\z1!=0?\x3:\x1)}
\pgfmathtruncatemacro\d{\z3!=0?\y2:(\z1!=0?\y3:\y1)}

\pgfmathtruncatemacro\rot{\a==0?90:0}
\pgfmathtruncatemacro\xsc{\a+\b}
\pgfmathtruncatemacro\ysc{\d-\c}

\pgfmathtruncatemacro\lett{\z1==1?3:(\z2==1?2:(\z3==1?1:(\z1==-1?4: (\z2==-1?5:(\z3==-1?6:0)))))}
\pgfmathtruncatemacro\presc{\z1+\z2+\z3}
\pgfmathtruncatemacro\prerot{\z1==1?90:(\z3==0?-90:0)}
\pgfmathtruncatemacro\col{80+20*\xsc*\ysc*\presc}
  
\path[shift = {(\posx,\posy)},fill = gray!10] (-.5,-.5)--(-.5,.5)--(.5,.5)--(.5,-.5)--cycle;
\node[text height=.5cm,text width=.5cm,color = black!\col,xscale = \xsc, yscale = \ysc,rotate=\rot,xscale = \presc, rotate = \prerot] at (\posx,\posy) {\bfseries \Large \lett};
}
\title{Bounding the number of holes required for folding rectangular polyominoes into cubes}
\author{Florian Lehner and Benjamin Shirley}
\date{January 2025}
\begin{document}
\maketitle
\begin{abstract}
We study the problem of whether rectangular polyominoes with holes are cube-foldable, that is, whether they can be folded into a cube, if creases are only allowed along grid lines. It is known that holes of sufficient size guarantee that this is the case. 

Smaller holes which by themselves do not make a rectangular polyomino cube-foldable can sometimes be combined to create cube-foldable polyominoes. We investigate minimal sets of holes which guarantee cube-foldability. We show that if all holes are of the same type, the these minimal sets have size at most 4, and if we allow different types of holes, then there is no upper bound on the size. 
\end{abstract}

\section{Introduction}

Origami and paper folding present a fruitful source of mathematical problems, and the subject as a whole has seen extensive study,
\cite{geometricfolding,origametry}.
In light of this it is perhaps surprising how little is known about many seemingly simple problems including the problem of folding polyominoes into polycubes.

A polyomino is a two-dimensional polyhedron, formed by taking some subset of the faces of the square lattice in the plane.
Although we require polyominoes to be connected, we allow cuts between squares to create holes in various shapes, in particular we do not require that all pairs of squares which share an edge in the grid are connected via that edge.
A polycube is the three-dimensional analogue of the polyomino, being a connected three-dimensional polyhedron, formed from a union of unit cubes. Given a polyomino, $P$, we would like to decide whether $P$ can be folded in a way to cover the surface of some given polycube.

Motivated by a recreational problem \cite{puzzle}, problems of this form appear to have first been examined in \cite{original}, where various folding models were considered.
This work was then expanded on in \cite{aichholzer2020}, which in particular focused on polyominoes containing holes.
Following on from \cite{aichholzer2020} and \cite{florian}, in this paper we further investigate the problem of whether a given polyomino can be folded into a unit cube in the perhaps simplest folding model,
where folds are only allowed along edges of the polyomino. Let us call a polyomino \emph{cube-foldable}, or simply \emph{foldable}, if this is possible.

It was conjectured in \cite{florian} that there is a polynomial time algorithm deciding the foldability of polyominoes.
In the same paper, the first provably correct general algorithm for this decision problem was presented, 
but this algorithm has exponential run time and needs to solve the unlink recognition problem from topology as a subroutine.
An implementation of this algorithm can be found in the second authors Github repository \cite{ben-github}.
More tractable subproblems are known to have polynomial time solutions. 
For example, a full classification of foldable tree-shaped polyominoes exists, see \cite{florian}, and thus we can recognize foldability of such polyominoes in polynomial time.

Much of the complexity of this problem appears to lie in the ways in which holes in polyominoes affect foldability, and how different holes 
can interact with each other in various foldings. We refer the reader to \cite{puzzle},
and encourage them to try these puzzles for themselves. 

In the present paper, we focus on the foldability of rectangular polyominoes with holes, 
making progress on Open Problem 8.1 from \cite{florian}. It was shown in \cite{aichholzer2020} that if a polyomino has a hole other than five so-called simple holes, 
then it can always be folded into a unit cube. Four of these simple holes are shown in Figure \ref{fig:4holes}. 
The last one, a slit of length 1, likely does not change foldability since any (valid) way of folding of the section of paper surrounding this hole folds as though the hole was non-existent,
see \cite{aichholzer2020}, Section 4.2. Although adding any one of the simple holes is not enough to make a rectangular polyomino foldable,
adding two or more of them can produce a foldable polyomino.

\begin{figure}
    \centering
    \subfloat[unit square]{
        \begin{tikzpicture}
            \fill[gray!10] (0.1, 0.1) rectangle (2.9, 2.9);
            \draw[gray, very thin] (0.1, 0.1) grid (2.9, 2.9);
            \draw[blue, very thick] (1, 1) rectangle (2, 2);
        \end{tikzpicture}
    }
    \subfloat[L-shape]{
        \begin{tikzpicture}
            \fill[gray!10] (0.1, 0.1) rectangle (2.9, 2.9);
            \draw[gray, very thin] (0.1, 0.1) grid (2.9, 2.9);
            \draw[blue, very thick] (1, 1) -- (2, 1);
            \draw[blue, very thick] (1, 1) -- (1, 2);
        \end{tikzpicture}
    }
    \subfloat[U-shape]{
        \begin{tikzpicture}
            \fill[gray!10] (0.1, 0.1) rectangle (2.9, 2.9);
            \draw[gray, very thin] (0.1, 0.1) grid (2.9, 2.9);
            \draw[blue, very thick] (1, 1) -- (2, 1);
            \draw[blue, very thick] (1, 1) -- (1, 2);
            \draw[blue, very thick] (2, 1) -- (2, 2);
        \end{tikzpicture}
    }
    \subfloat[2 tall slit]{
        \begin{tikzpicture}
            \fill[gray!10] (0.1, 0.1) rectangle (1.9, 3.9);
            \draw[gray, very thin] (0.1, 0.1) grid (1.9, 3.9);
            \draw[blue, very thick] (1, 1) -- (1, 3);
        \end{tikzpicture}
    }
    \caption{The $4$ simple holes.}
    \label{fig:4holes}
\end{figure}
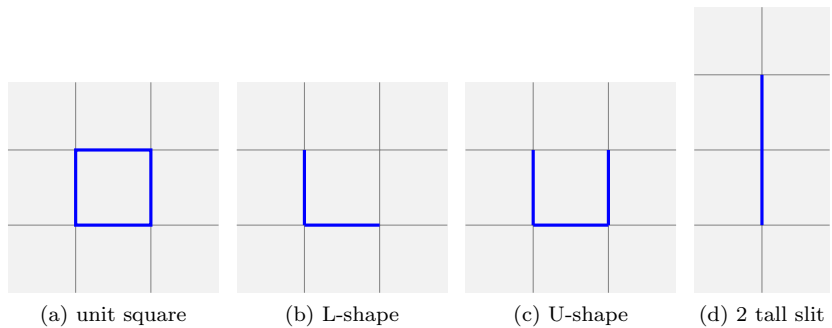

We investigate how large a minimal set of simple holes which guarantees foldability can be. 
More precisely, let us say that a set of holes \emph{cooperate}, if a rectangular polyomino with these holes is foldable, 
and that such a set \emph{minimally cooperates} if it cooperates, but no strict subset does. 
For one of these simple holes, the unit square hole, it was shown in \cite{florian} that a set of unit square holes cooperates if and only if some subset of two unit square holes cooperates. 
In other words, all minimally cooperating sets have size $2$ (and a full classification is known). This raises the question of whether something similar holds for the other simple holes.

\begin{question}
    Is there a constant, $c$, such that any cooperating set of simple holes contains a cooperating subset of size $c$?
\end{question}

We show that this question has a positive answer if either all or none of the simple holes in the set are slits of length $2$, and a negative answer if we mix slits of length $2$ with other types of simple holes.

More precisely, in section \ref{sec:3holesbound} we show that any minimally cooperating set of square, L-shaped and U-shaped holes has size $2$.
In section \ref{sec:slitBound}, we investigate minimally cooperating sets of slits. For polyominoes of size at least $6\times 6$, any such minimally cooperating set has again size 2, but perhaps surprisingly, there are minimally cooperating sets of size $4$ in the $4\times n$ case, and of size $3$ in the $5\times n$ case. 
Complete characterisations of minimally cooperating sets of holes can be extracted from these results.
Finally in section \ref{sec:unbounded}, we demonstrate that there are polyominoes containing both square and 2-tall slit holes with arbitrarily large minimally cooperating sets of holes.

\section{Definitions and Notation}
\label{sec:defs}
\subsection{The folding model}

A polyomino is a finite collection of faces of the square grid in $\mathbb R^2$, where boundary edges of adjacent faces can, but do not have to be identified. As mentioned in the introduction, in this paper we study when polyominoes fold can be folded into a unit cube, where folds are only allowed along grid lines, and only at angles of $90^\circ$ and $180^\circ$. Intuitively, folding a polyomino into a cube amounts to ``deforming'' the polyomino in $\mathbb R^3$, such that each face of the polyomino ends up covering a face of the cube. Making this intuition (as well as the definition of a polyomino) rigorous requires some topology. 

Following the exposition in \cite{florian}, an \emph{abstract polyomino}, $P$, is a square complex in which every edge and vertex are contained inside at least one face, and each edge is contained in at most two faces.
A \emph{folding blueprint} is a mapping, $\beta: P\to\mathbb{R}^3$, in which the restriction of $\beta$ to each face of $P$ is an isometry. Now we can define a \emph{polyomino} as an abstract polyomino that admits a flat folding blueprint, that is, a map $P\to \mathbb R^2$ such that the images of different squares overlap only at edges. Note that if a polyomino has a flat folding blueprint, then this flat folding blueprint is unique up to isometries of $\mathbb R^2$.

A \emph{folded state}, $F$, of a polyomino, $P$, is a piecewise linear embedding, $F:P\to\mathbb{R}^3$.
$F$ is said to be a $\varepsilon$-realization (or simply, a realization) of a folding blueprint, $\beta$, if $||F(x)-\beta(x)||<\varepsilon$ for every $x\in P$.

We say two folded states are equivalent if they are ambient isotopic. Further, if a folded state, $F$, is equivalent to 
a realization of a flat folding blueprint, then we call $F$ a \emph{valid folding}, or sometimes just a \emph{folding}. Finally, taking the surface of the unit cube, 
$\mathcal{C}\subseteq \mathbb{R}^3$, we say $F$ is a folding into the cube if $F$ corresponds to some folding blueprint $\beta: P\to\mathcal{C}$. Additionally, if $F$ covers all faces of the cube, we say it is a folding \emph{onto} the cube. As mentioned earlier, we call a polyomino \emph{cube-foldable}, or sometimes just \emph{foldable} if it admits a folding onto the cube.

Clearly, any folding, $F$, of a polyomino onto the cube uniquely determines the following two (combinatorial) objects. 

A \emph{facemapping} of a polyomino into the unit cube, is a homomorphism, $\varphi$, from the faces, vertices, and edges of the polyomino to the faces, vertices and edges of the cube, which preserves the incidence relation. 
The facemapping corresponding to a folding, $F$, is given by the image of each face under the corresponding folding blueprint.
We will often identify the faces of the cube with the integers $1$ through $6$ (see Figure \ref{fig:basicFacemap}), and describe our mapping in terms of these integers; throughout the paper, the labelling of the faces is assumed to be the labelling obtained from folding the cube net in Figure \ref{fig:basicFacemap}.
We frequently describe a facemapping by overlaying these integers (with various applied rotations and reflections) to the squares of a polyomino (see Figure \ref{fig:squareHoleFolding} for an example).

\begin{figure}
    \centering
    \begin{tikzpicture}[xscale=0.7,yscale=0.7]
        \initcube\printcube
        \rollup\printcube
        \rolldown\rollright\printcube
        \rollleft\rolldown\printcube
        \rollup\rollleft\printcube\rollleft\printcube
        \begin{scope}[shift={(-.5,-.5)}] 
            \draw[blue, very thick] (1, 1) -- (1, 2);
            \draw[blue, very thick] (1, 2) -- (0, 2);
            \draw[blue, very thick] (0, 2) -- (0, 1);
            \draw[blue, very thick] (0, 1) -- (-2, 1);
            \draw[blue, very thick] (-2, 1) -- (-2, 0);
            \draw[blue, very thick] (-2, 0) -- (0, 0);
            \draw[blue, very thick] (0, 0) -- (0, -1);
            \draw[blue, very thick] (0, -1) -- (1, -1);
            \draw[blue, very thick] (1, -1) -- (1, 0);
            \draw[blue, very thick] (1, 0) -- (2, 0);
            \draw[blue, very thick] (2, 0) -- (2, 1);
            \draw[blue, very thick] (2, 1) -- (1, 1);
            \draw[gray] (1, 1) -- (0, 1);
            \draw[gray] (0, 1) -- (0, 0);
            \draw[gray] (-1, 1) -- (-1, 0);
            \draw[gray] (0, 0) -- (1, 0);
            \draw[gray] (1, 0) -- (1, 1);
        \end{scope}
    \end{tikzpicture}
    \caption{Standard Cube Layout}
    \label{fig:basicFacemap}
\end{figure}
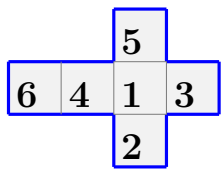

A \emph{layer mapping}, $l$, is a mapping that assigns to each face of the polyomino, an integer between $1$ and the number of squares mapping to the same face on the cube, such that restricting $l$ to the squares that map to a particular face is a bijection. The layermapping corresponding to a folding,
$F$, is given by the order (inside to outside of the unit cube) in which the faces of $P$ are mapped in the neighbourhood of a face of $\mathcal C$.
In our figures, we sometimes use small numbers in the corners of faces to indicate the layer mapping.

\begin{defn}
    We call a triple $(P, \varphi, l)$ a \emph{valid pseudo-folding} if there is some valid folding, $F:P\to\mathcal{C}$, which defines $\varphi$ and $l$.
\end{defn}

It is clear that the existence of some onto facemapping, $\varphi$, is a necessary condition for a polyomino, $P$, to fold onto $\mathcal{C}$.
Therefore, one way to disprove the foldability of $P$ is to prove that no such facemapping can exist.
Conversely, in order to prove that a polyomino does admit a valid folding, we take the pragmatic approach of providing a pseudo-folding whose validity can be checked by folding (a paper model of) the polyomino.

Note that face mappings make no distinction between $+180^\circ$ and $-180^\circ$ folds.
A $180^\circ$ fold between two squares implies they map to the same face, regardless of the direction of this fold.
Moreover, it is clear that if a square is folded to a face under a $+90^\circ$ fold, then folding this crease 
$-90^\circ$ will not allow it to map to a face anymore. Therefore, when discussing facemappings, we make no distinction between the 
directions creases are folded, and thus we simply describe all folds as either $180^\circ$ or $90^\circ$ folds.

\subsection{Further Definitions and Basic Lemmas}

Consider a flat folding blueprint, $\beta$, of a polyomino, $P$. Each connected component of the boundary of $P$ is homeomorphic to a circle, and thus maps to a closed curve (which may intersect itself non-trivially). Since $P$ is bounded and connected, $\beta(P)$ lies entirely on one side (inside or outside) of each such closed curve, and there is exactly one such closed curve for which this is the bounded side. We call this closed curve the \emph{shape} of $P$ and all other closed curves \emph{holes}. By slight abuse of notation, we also call the preimage of a hole under $\beta$ a hole, that is, we view the hole as a subset of the polyomino.

We will mostly concern ourselves with rectangular polyominoes, that is, polyominoes whose shape is a rectangle. In \cite{aichholzer2020}, the following lemma regarding such polyominoes was shown. We use this lemma extensively in our results.
\begin{lem}
    \label{lem:rectangularSection}
    Let $k,n\ge 2$ and let $P$ be a rectangular polyomino without holes.
    Then, in every folding of $P$ into $\mathcal C$, any pair of collinear creases are either both folded
    by $90^\circ$ or both folded by $180^\circ$. 
    Moreover, either all horizontal or all vertical creases of $P$ are folded by $180^\circ$.
\end{lem}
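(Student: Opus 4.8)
The plan is to analyze how a folding of a hole-free rectangular polyomino interacts with the grid of creases, using the rigidity that comes from the fact that every face must cover a face of the cube. The key observation is a local one: consider two squares $s_1, s_2$ of $P$ sharing an edge $e$, and suppose $e$ is folded by $90^\circ$. Then $s_1$ and $s_2$ map to adjacent faces of the cube. Now look at a square $s_3$ adjacent to $s_1$ across an edge $e'$ that is collinear with $e$ (i.e. $e, e'$ lie on a common grid line, on opposite sides of $s_1$). The crease $e'$ must be folded by either $90^\circ$ or $180^\circ$. I would argue that if $e$ is a $90^\circ$ fold, then $e'$ cannot be a $180^\circ$ fold: a $180^\circ$ fold at $e'$ would send $s_3$ back on top of $s_1$'s face, but then the three squares $s_3, s_1, s_2$ occupy a configuration of two faces of the cube meeting at the edge $\beta(e)$, and tracing one more collinear step (or examining the strip of four squares $s_4, s_3, s_1, s_2$) forces a contradiction with embeddedness / with the facemapping being well-defined around that edge of the cube. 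The cleanest way to see it: a maximal straight strip of squares of $P$, folded only along its internal collinear creases, wraps around the cube, and the pattern of fold angles along such a strip is globally constrained — essentially it must be periodic with the ``wrap'' pattern of a $1 \times 4$ band around the cube. So **first** I would set up this strip lemma: \emph{along any maximal row or column of $P$, the sequence of fold angles is determined by the cube geometry, and in particular all creases in the row are $90^\circ$ except possibly a bounded pattern, forcing ``all $90^\circ$ or all $180^\circ$'' once the strip is long enough} — here the hypothesis $k, n \ge 2$ and the rectangular shape guarantees the strips are long enough and that there are no boundary effects beyond the two ends.

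**Second,** having established that each individual row (and each individual column) is "monochromatic" in its fold angles — all $90^\circ$ or all $180^\circ$ — I would upgrade "same along one line" to "same across parallel lines." Pick two parallel collinear creases $e$ (in row $i$) and $f$ (in row $i'$) that are also collinear, meaning they share a vertical grid line as endpoints — wait, more carefully: two creases are collinear if they lie on the same line, so a horizontal crease in row $i$ between columns $j, j+1$ is collinear with the horizontal crease in row $i$ between columns $j+1, j+2$; that is the within-row statement. The cross-row statement I actually need is: the \emph{vertical} crease separating columns $j$ and $j+1$ runs along a single vertical grid line through all rows, and by the column analysis that whole vertical crease is monochromatic. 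So the statement "any pair of collinear creases fold the same way" is really just the per-row / per-column statement, already obtained in step one.

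**Third,** for the "moreover" clause — either all horizontal creases are $180^\circ$ or all vertical creases are $180^\circ$ — I would argue as follows. Suppose some horizontal crease is $90^\circ$ and some vertical crease is $90^\circ$. By step one, the entire horizontal grid line $h$ containing the first is folded by $90^\circ$ everywhere along $P$, and the entire vertical grid line $v$ containing the second is folded by $90^\circ$ everywhere. These two lines cross at an interior vertex $p$ of $P$ (interior since $k,n\ge 2$ give us enough room, and the extremal lines can be chosen interior). At $p$, four squares meet, and two creases through $p$ are each $90^\circ$ folds; I would check that no local folded state around a vertex can have two perpendicular $90^\circ$ creases and still have all four squares covering faces of the cube without forcing a fifth face or an immersion failure — concretely, a $90^\circ$ fold along $h$ through $p$ and a $90^\circ$ fold along $v$ through $p$ would bend the neighborhood of $p$ into a corner of the cube, and the fourth square around $p$ would have nowhere consistent to go. This local vertex analysis (essentially: a flat vertex with two orthogonal mountain/valley $90^\circ$ creases cannot exist in a folding onto the cube) is the crux. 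Hence at most one of the two crease directions contains a $90^\circ$ fold; combined with step one (each line monochromatic), \emph{every} crease in the other direction is $180^\circ$, which is exactly the claim.

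**The main obstacle** I expect is making the local rigidity arguments in steps one and three fully rigorous rather than "geometrically obvious": one must carefully rule out the $90^\circ$-then-$180^\circ$ transition along a strip, and rule out two orthogonal $90^\circ$ creases at a vertex, using the precise definitions of folded state (piecewise-linear embedding) and facemapping onto $\mathcal C$. Since this lemma is quoted from \cite{aichholzer2020}, I would either reproduce their strip-and-vertex case analysis or, more economically, phrase everything in terms of facemappings: show combinatorially that the facemapping restricted to a straight strip of $P$ must be (a cyclic shift of) the facemapping of the $1\times 4$ band, and that the facemapping around an interior vertex with two orthogonal $90^\circ$ creases cannot be completed to an incidence-preserving homomorphism into the cube. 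That reduces the whole proof to finite combinatorial checks, sidestepping the hardest topology.
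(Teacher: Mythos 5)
First, a point of reference: the paper does not prove this lemma at all --- it is quoted from \cite{aichholzer2020} and used as a black box --- so your proposal has to stand on its own. Your step three is essentially sound: once each grid line is known to carry a single fold angle, an interior vertex lying on both a $90^\circ$ horizontal line and a $90^\circ$ vertical line has all four incident creases folded $90^\circ$, so the four surrounding squares would map to a $4$-cycle of pairwise perpendicular faces through a single cube vertex, forcing two perpendicular edges of one of those squares onto the same cube edge --- a genuine contradiction, which yields the ``moreover'' clause from the first clause.

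The gap is in step one, and it is the heart of the lemma. You conflate \emph{collinear} creases (distinct unit creases lying on one grid line, e.g.\ the top edges of all squares in a fixed row) with \emph{parallel} creases inside a strip (the internal creases between consecutive squares of a row). Your local setup is already inconsistent --- a unit square $s_1$ cannot have two distinct edges $e,e'$ on the same grid line --- and the ``strip lemma'' you then formulate, that the fold angles along a maximal row of squares must all agree once the strip is long enough, is false: a $1\times n$ strip folds into a ring of four faces with essentially arbitrary mixtures of $90^\circ$ and $180^\circ$ folds (alternating, for instance), and every folding displayed in the paper's figures has such mixed rows; the hypothesis $k,n\ge 2$ plays no ``long enough'' role. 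In step two you correctly re-identify what collinearity means, but then assert that the per-grid-line statement was ``already obtained in step one'', which it was not. The missing argument is a local $2\times 2$ analysis: if the crease between the two left squares of a $2\times 2$ block is folded $180^\circ$, the far endpoints of the two flanking perpendicular creases are identified, which forces the images of those two creases to coincide and hence forces the adjacent collinear crease to be $180^\circ$ as well ($90^\circ$ then propagates by contraposition). This vertex-identification technique is exactly what the paper does use when it proves the analogous propagation statements in the presence of slits (Lemmas \ref{lem:fact1} and \ref{lem:fact2}); without it, your proof of the first clause --- and therefore of the lemma --- does not go through.
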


Note that we may apply this lemma to sub-polyominoes, $P'$, of a larger polyomino, $P$. This ensures that every rectangular, hole-free sub-polyomino of any polyomino must have a complete set of $180^\circ$ folds in one direction in any folding of $P$ into $\mathcal C$.
Holes sometimes allow us to fold rectangular polyominoes in ways we could not fold a regular rectangular section of paper. In particular, if a polyomino contains a hole other than the five simple holes shown in Figure \ref{fig:4holes} then it has a folding onto the cube \cite{aichholzer2020}. A combination of simple holes can also guarantee cube-foldability, but we do not always need all such holes.

\begin{defn}
    \label{def:trivialhole}
   Let $h$ be a simple hole in a polyomino, $P$, and let $(P, \varphi, l)$ be a valid pseudo-folding. 
   We say that $h$ folds \emph{trivially} if any two sides of $h$ have the same image under a flat folding blueprint $\beta$ if and only if they have the same image under $\varphi$. Otherwise we say $h$ is folded \emph{non-trivially}.
\end{defn}
Note that if $h$ is a unit square hole, then $h$ is folded trivially if and only if its 4 sides map to the 4 sides of some face of the cube. In particular, a hole $h$ is folded trivially if we could "fill in" $h$, or pretend it does not exist, and still fold $P$ in the same way as before.

Let $S$ be a set of holes of a polyomino, $P$. As mentioned in the introduction, we say that $S$ \emph{cooperates} if there is a folding of $P$ such that every hole, $h \notin S$, is folded trivially. Moreover, $S$ \emph{minimally cooperates} if $S$ cooperates but no subset of $S$ does. Note that a polyomino may have multiple minimally cooperating sets, which naturally implies that this polyomino would have multiple foldings.

\begin{defn}
    \label{def:support}
The \textit{support} of a set of holes, $\{h_1, ..., h_n\}$, in a polyomino, $P$, is the smallest rectangular sub-polyomino that contains all holes in the set.
\end{defn}

\section{A bound for Square, L-shaped and U-shaped Holes}
\label{sec:3holesbound}

Observing the non-trivial foldings of the L-shaped and U-shaped holes, we note that those holes fold in a very similar way to the unit square hole. It was shown in \cite[Lemma 10]{aichholzer2020} that up to rotation and reflection, 
there is only one non-trivial facemapping for a unit square hole, seen in Figure \ref{fig:squareHoleFolding}.
In \cite[Theorem 6.2]{florian}, it was shown that a rectangular polyomino containing only unit square holes
is foldable if and only if some pair of holes cooperate, hence showing the largest minimally cooperating set is of size $2$.
We generalise this proof to include L-shaped and U-shaped holes.

\begin{figure}
    \centering
    \begin{tikzpicture}[xscale=0.7,yscale=0.7]
        \initcube\printcube
        \rollup\printcube
        \flipup\printcube
        \rollright\printcube
        \rollright\printcube
        \rolldown\printcube
        \flipdown\printcube
        \rollleft\printcube
        \begin{scope}[shift={(-.5,-.5)}] 
            \draw[gray] (0.1, 0.1) grid (2.9, 2.9);
            \fill[white] (1, 1) rectangle (2, 2);
            \draw[blue, very thick] (1, 1) rectangle (2, 2);
        \end{scope}
    \end{tikzpicture}
    \caption{Facemapping of a non-trivially folded square hole}
    \label{fig:squareHoleFolding}
\end{figure}

\begin{thm} 
    \label{thm:3holebound}
    A rectangular polyomino containing only unit square, L-shaped and U-shaped holes is foldable if and only if some two of its holes cooperate.
\end{thm}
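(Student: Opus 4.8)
The plan is to mimic the strategy used for unit square holes in \cite[Theorem 6.2]{florian}, leveraging the observation that L-shaped and U-shaped holes fold essentially like square holes. One direction is trivial: if some two holes cooperate, then the polyomino is foldable by definition. For the converse, suppose $P$ is foldable, fix a folding $F$ with associated pseudo-folding $(P,\varphi,l)$, and let $S$ be a minimally cooperating set of holes for this folding. We must show $|S|\le 2$, so assume for contradiction that $|S|\ge 3$.

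First I would classify the non-trivial facemappings of each hole type. For the square hole, \cite[Lemma 10]{aichholzer2020} tells us there is essentially one, shown in Figure \ref{fig:squareHoleFolding}: the four sides of the hole map to four distinct edges of the cube forming a cycle around a pair of opposite vertices, rather than the four sides of a single face. The key structural feature I want to extract is that a non-trivially folded square (and, after the analogous case analysis, each non-trivially folded L- or U-hole) forces a specific ``defect'' in Lemma \ref{lem:rectangularSection} — namely, it lets a horizontal crease line and a vertical crease line through its support each carry a $180^\circ$ fold simultaneously, or more precisely it breaks the global ``all folds in one direction are $180^\circ$'' conclusion only in a bounded neighbourhood. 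For the L- and U-shapes, I would check (by the same kind of finite case analysis on where their $2$ or $3$ boundary edges can map in the cube, together with the incidence-preservation of $\varphi$ and validity via a paper model) that each non-trivial folding again behaves like a square-hole folding, occupying a $3\times 3$ (or similar) support and producing the same type of local defect.

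Next, the heart of the argument: I would show that the supports of any two non-trivially folded holes in $S$ must overlap, or more strongly, that the information contained in a non-trivially folded hole is determined by the folding on a bounded region around it, so that if $|S|\ge 3$ then already two of the three holes suffice to realize the same non-trivial behaviour and the third can be folded trivially — contradicting minimality. Concretely, the hole-free parts of $P$ between the holes are rigid by Lemma \ref{lem:rectangularSection}, so once two of the holes are folded non-trivially the facemapping of the entire polyomino is forced, including the region around the third hole; one then checks that this forced local facemapping around the third hole is precisely its trivial one (its boundary edges map as a flat fill-in would). This reduces the claim to a finite check over the relative positions and orientations of the two ``active'' holes and the third hole.

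The main obstacle I expect is the case analysis for the L- and U-shaped holes: unlike the square hole, these are not symmetric under the full dihedral group, so there are more sub-cases for how their boundary edges can sit in the cube, and one must rule out the possibility that an L- or U-hole admits a genuinely new kind of non-trivial folding that does \emph{not} reduce to the square-hole picture — in particular a folding whose local ``defect'' is of a different shape and could therefore fail to be subsumed by two other holes. Handling this cleanly will likely require isolating the right invariant (e.g.\ which pair of opposite cube vertices the hole ``wraps,'' and the parity/direction of the induced $180^\circ$ creases) and showing every non-trivial L- or U-folding carries the same invariant as some square-hole folding. Once that is in hand, the reduction argument from the previous paragraph applies uniformly and the bound $|S|\le 2$ follows.
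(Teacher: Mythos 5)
There is a genuine gap, and it sits exactly where you flagged your ``main obstacle'': a U-shaped hole admits a non-trivial folding that does \emph{not} reduce to the square-hole picture. Besides folding ``like a unit square hole'' (the eight surrounding squares covering five faces), a U-shaped hole can be \emph{T-folded}: the eight squares around it fold trivially, exactly as a hole-free $3\times 3$ block would, and only the central flap escapes to a different face. This folding carries none of the ``wraps a pair of opposite vertices / breaks Lemma~\ref{lem:rectangularSection} locally'' invariant you propose to extract, so your plan of showing that every non-trivial L- or U-folding ``carries the same invariant as some square-hole folding'' cannot succeed. The paper is forced into a genuinely separate case: when every non-trivially folded hole is a T-folded U-hole, the flattened polyomino covers only a ring of four faces, and one must find two flaps attached at rows of opposite parity to cover the two remaining opposite faces. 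None of this is recoverable from your reduction.

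The second gap is your central mechanism. You claim that once two holes fold non-trivially, rigidity of the hole-free regions forces the facemapping everywhere, and that the forced local facemapping at a third hole is its trivial one. That is false as stated: in the witnessing folding of a minimally cooperating set all holes of the set fold non-trivially, and nothing forces the third to be trivial (three square-like holes can consistently all cover the same five faces). The correct argument is not about the given folding at all but about constructing a \emph{new} one: if two neighbouring square-like holes cover different five-face sets, together they cover all six faces of the cube, and then folding everything outside their joint support flat with $180^\circ$ folds yields a folding in which only those two holes are non-trivial --- so they cooperate, contradicting minimality. If instead all square-like holes cover the same five faces, the sixth face must be supplied by a T-flap, which again reduces to a cooperating pair. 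You would also need to handle the configurations where the support of an L-shaped hole contains another hole (Figures~\ref{fig:LshapeSquareOverlap} and~\ref{fig:LshapeUshapeOverlap}), which your overlap/rigidity sketch does not address. As a sanity check that some hole-specific input is unavoidable: Figure~\ref{fig:3slitexample} exhibits three slits that minimally cooperate, so any argument resting only on rigidity of hole-free rectangles, without using what the non-trivial foldings of these particular holes look like, proves too much.
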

\begin{proof}
    The forward direction for this proof is trivial. Therefore, it remains to prove that if a polyomino, $P$, satisfying the conditions of the theorem folds, then $P$ contains two holes that cooperate.

    Much like the proof in \cite{florian}, we proceed by induction on the number of holes, but we need some preliminary observations.

    Firstly, observe the following properties of the L and U-shaped holes.

    U-shaped hole have (up to rotation and reflection) two different types of non-trivial foldings.
    Ignoring the central flap for a moment, notice that we can fold the remaining $8$ squares that 
    define the hole either trivially or non-trivially.
    
    If they fold non-trivially, we can see that the eight squares surrounding the hole 
    are folded exactly like a unit square hole, covering $5$ sides of the cube.
    The middle flap then must cover one of these $5$ faces, as all $4$ central edges map 
    to the face of the cube opposite the face left uncovered by the folding. In particular, the flap plays no role for the faces of the cube covered by the folded polyomino, and hence we do not distinguish the two possible ways of folding it.
    When a U-shaped hole folds like this, we say it is \emph{folded like a unit square hole}. 

    The other way these $8$ squares can fold is trivially, in exactly the same way as a $3\times 3$ rectangle would. 
    Because the flap is only attached by a single edge, it can map to a different face than the rest of the hole, creating a 
    ``T-shape''; in this case we say that the U-shaped hole is \emph{T-folded}.

    In any consistent facemapping, $\varphi$, of a polyomino where a U-shaped hole is T-folded,
    note that the square with only a single edge connected to the polyomino essentially works independently:
    the face this square maps to does not affect the consistency of the mapping, so long as it remains incident to this edge.
    Therefore, whenever such a hole is folded like this, we can define a new mapping, $\varphi'$, by mapping this face to 
    wherever it would go if it were folded trivially. Clearly this respects the incidence relation and so this is a consistent mapping too.
    It is easy to see that the layer mapping can be adjusted to preserve the validity of this folding,
    simply by putting the flap on any layer that will not cause a self-intersection. Note that the resulting folding is always a folding into the cube, but may or may not be a folding onto the cube depending on whether the face of the cube covered by the flap is also covered by other faces of the polyomino.

    For the L-shaped hole, it was shown in \cite{aichholzer2020} that the only non-trivial foldings are those that cover $5$ faces of the cube, similar to the unit square hole, but more limited due to the asymmetry. 
    
    We can thus essentially view an L-shaped hole as a restricted version of the square hole, except that it can appear ``closer'' to other holes than the square hole in the sense that the support of an L-shaped hole may contain another simple hole, see Figure \ref{fig:LshapeSquareOverlap}.
    Examining all possible facemappings of the polyomino in Figure \ref{fig:LshapeSquareOverlap} where the L-shaped is folded non-trivially, we note that the unit square hole is folded trivially in all of them. Thus, if an L-shaped hole and a square hole appear in the configuration shown in Figure \ref{fig:LshapeSquareOverlap}, then  in any folding onto the unit cube, one of these holes must fold trivially and therefore the set of holes does not minimally cooperate. An analogous argument applies if the square hole is replaced by another L-shaped hole.

    \begin{figure}
        \centering
        \begin{tikzpicture}
    \fill[gray!10] (0, 0) rectangle (4, 4);
    \draw[gray, very thin] (0, 0) grid (4, 4);
    \draw[blue, very thick] (0, 0) rectangle (4, 4);
    \draw[blue, very thick] (2, 2) rectangle (3, 3);
    \draw[blue, very thick] (1, 1) -- (2, 1);
    \draw[blue, very thick] (1, 1) -- (1, 2);
\end{tikzpicture}
        \caption{An L-shaped hole, which has overlapping closure with a 
        square hole}
        \label{fig:LshapeSquareOverlap}
    \end{figure}
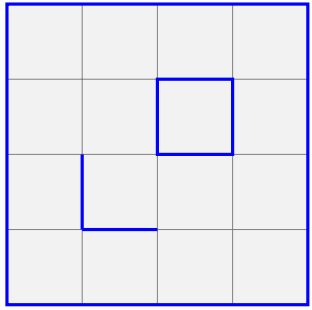
    \begin{figure}
        \centering
        \begin{tikzpicture}
    \fill[gray!10] (0, 0) rectangle (4, 4);
    \draw[gray, very thin] (0, 0) grid (4, 4);
    \draw[blue, very thick] (0, 0) rectangle (4, 4);
    \draw[blue, very thick] (1, 1) -- (2, 1);
    \draw[blue, very thick] (1, 1) -- (1, 2);
    \draw[blue, very thick] (2, 2) -- (2, 3);
    \draw[blue, very thick] (2, 2) -- (3, 2);
    \draw[blue, very thick] (3, 2) -- (3, 3);
\end{tikzpicture}
        \caption{An L-shaped hole, with a U-shaped hole that folds.}
        \label{fig:LshapeUshapeOverlap}
    \end{figure}
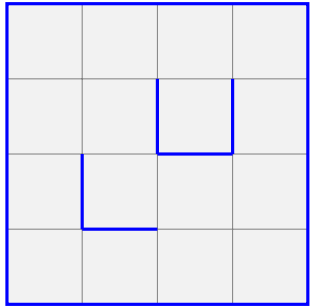
    If the square hole is replaced by a U-shaped hole, then there are orientations of the U-shaped 
    hole which result in a foldable polyomino, see Figure \ref{fig:LshapeUshapeOverlap}. 
    If the two holes cooperate, we are done. If they do not cooperate, and both holes fold non-trivially, then the U-shaped  hole must be T-folded. In this case, the flap of the U-shaped hole will necessarily cover a face which is also covered by some square bordering the L-shaped hole. In particular, if the polyomino in question has a folding onto the cube, then it also has a folding onto the cube where one of the two holes is folded trivially.

    We are now ready to proceed with the proof. As mentioned earlier, we will use induction on the number of holes $k$ to show that any foldable rectangular polyomino with $k$ holes contains a pair of cooperating holes.

    The base case $k=2$ holds trivially, so assume the theorem holds for 
    all polyominoes with some $k\ge 2$ holes.
    Consider a foldable rectangular polyomino $P$, with $k+1$ holes, and let $(\varphi, l)$ be a valid pseudo-folding of $P$ into a cube.
    Then, under $\varphi$ every hole must be folded non-trivially, as otherwise, we would have a cooperating set of $k$ holes and could apply the induction hypothesis. Hence, by the above discussion we may assume that the support of any L-shaped hole contains no other holes.
    
    We separate the proof into two cases: either there is a hole which folds like a unit square hole under the folding $(\varphi, l)$, or there is no such hole.

    \textbf{Case 1:} Some hole folds like a unit square. To begin,
    construct an alternate mapping $\varphi'$ by taking every U-shaped hole 
    that T-folds, and tucking the flap away such that it is folded trivially, leaving us with a folding in which every non-trivially folded hole folds like a unit square hole. We will show that every such hole must fold to cover the same five faces.

    Construct a graph $G$, where the unit-square folding holes are the vertex set, 
    and two holes are connected if their support (recall Definition \ref{def:support}) contains no other holes that fold like unit square holes.

    Clearly this graph is connected. We claim that the support of any two neighbouring holes
    $h_1, h_2$ in this graph must map to the same five faces. If not, then together they cover all six faces of the cube because the set of squares surrounding each of the holes covers $5$ faces.
     But then we can fold everything outside the support trivially with $180^\circ$-folds, showing that $h_1, h_2$ cooperate.

    Hence, every hole which folds like a unit square must cover the same five faces, and thus the final face must be covered by the flap of a T-folded U-shaped hole. If there is more than one such T-folded U-shaped hole, we can fold all but one of them trivially and still obtain a folding of $P$ into a cube, and thus by induction we may assume that there is exactly one T-folded U-shaped hole and all other holes are folded like a unit square.

    There is a rectangular subpolyomino of $P$ containing this T-folded U-shaped hole, and exactly one other hole. The other hole 
    must cover $5$ faces of the cube under $\varphi$, and the flap covers the last one, so this subpolyomino folds into a cube under $(\varphi, l)$. But this means that we can fold everything outside this subpolyomino trivially with $180^\circ$-folds, showing that the two holes cooperate.

    \textbf{Case 2:} No hole is folded like a square hole, thus every hole is a T-folded U-shaped hole.
    In this case begin by constructing the alternate mapping, $\varphi'$, from $\varphi$ by tucking away the flaps of all T-shaped folds as before.
    Now we are essentially left with a big rectangle, and so by Lemma \ref{lem:rectangularSection}, every row or column 
    must be folded $180^\circ$ under $\varphi'$. Therefore this folding covers at most $4$ faces 
    of the cube in a ring shape, leaving at least $2$ opposite faces uncovered. We assume without loss of generality that all the horizontal creases are folded $180^\circ$.
    Observe that after applying the $180^\circ$ folds, every even row of vertices gets mapped to one side of the ring,
    and every odd row of vertices gets mapped to the other side. Thus there must be a T-folded U-shaped hole whose flap is attached at an even row, and one whose flap is attached at an odd row.

    If the polyomino is $4$ or more columns wide, then these two holes cooperate. The presence of any U-shaped hole in $P$ guarantees that the width of the polyomino is at least $3$. If the width is exactly 3, then $\varphi'$ only covers at most $3$ squares. In any orientation of a U-shaped hole in this polyomino, due to the $180^\circ$ folds, at least one vertex of the 
    hole's connected edge touches a square mapping to the center of the $1\times 3$ strip these $180^\circ$ folds create. Thus the face of the cube opposite to the one this center column maps to is left uncovered, thus contradicting the assumption that $\varphi$ was a folding of $P$ onto the cube.
\end{proof}

\section{A Bound for 2-Tall Slits}
\label{sec:slitBound}
The previous proof shows that for three of the four holes we are considering, any minimally cooperating set has size exactly $2$. 
For polyominoes containing only slits things are slightly more complicated.
For instance, the polyomino in Figure \ref{fig:3slitexample} is foldable, but no two of its holes cooperate. 
Before showing any results, we first introduce some basic terminology regarding these slit holes and present a few lemmas.

\begin{figure}[hbt]
    \centering
    \begin{tikzpicture}[xscale=0.7,yscale=0.7]
        \initcube\printcube
        \rollup\printcube
        \rollup\printcube
        \rollup\printcube
        \rollup\printcube
        \flipright\printcube
        \flipdown\printcube
        \rolldown\printcube
        \flipdown\printcube
        \rolldown\printcube
        \rollright\printcube
        \rollup\printcube
        \flipup\printcube
        \rollup\printcube
        \flipup\printcube
        \flipright\printcube
        \rolldown\printcube
        \rolldown\printcube
        \rolldown\printcube
        \rolldown\printcube
        
        \begin{scope}[shift={(-.5,-.5)}]
            \draw[gray, very thin] (0, 0) grid (4, 5);
            \draw[blue, very thick] (0, 0) rectangle (4, 5);
            \draw[blue, very thick] (1, 2) -- (1, 4);
            \draw[blue, very thick] (3, 2) -- (3, 4);
            \draw[blue, very thick] (2, 1) -- (2, 3);
        \end{scope}
    \end{tikzpicture}
    \caption{A polyomino where the removal of any one slit destroys foldability.}
    \label{fig:3slitexample}
\end{figure}

Each non-trival folding of a slit hole falls into one of two distinct classes.
\begin{defn}
Let $h$ be a two-tall slit, and assume without loss of generality that it is vertical, and folded non-trivially. If the central horizontal crease is folded $180^\circ$, we say the slit folds to make a \textit{flap}.
If the central horizontal crease is folded $90^\circ$, we say the slit is folded into a \textit{ring}.
\end{defn}

The following two lemmas show that these are the only non-trivial foldings of these holes,
and also provide some insight into how folds of the surrounding creases interact with slits.

\begin{lem}
    \label{lem:fact1}
    Let $h$ be a vertical slit. The vertical creases directly above and below $h$ are either both folded $180^\circ$ or both folded $90^\circ$ under any folding into the cube. 
    Furthermore, if they are folded $180^\circ$, then the support of $h$ covers at most $4$ faces of the cube, leaving at least two opposite faces of the cube uncovered.
\end{lem}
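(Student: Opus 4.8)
The plan is to analyze the local structure around a vertical slit $h$ and apply Lemma~\ref{lem:rectangularSection} to a suitable hole-free rectangular sub-polyomino. Write the slit as the shared edge-pattern inside a $2\times 1$ (or slightly larger) block: the slit separates two columns over a height of $2$, and directly above and below the slit there are vertical creases $c_{\text{top}}$ and $c_{\text{bot}}$ that lie on the same grid line. First I would observe that, away from the slit itself, the two columns flanking the slit together with the rows above and below form an honest rectangular hole-free sub-polyomino $P'$ of the ambient polyomino (one has to be a little careful at the top and bottom, but one can always find such a $P'$ of height at least $2$ in each of the two "arms" abutting the slit line, provided the slit is not at the very boundary --- and if it is, the statement is about creases that then may not all exist, so the claim is vacuous or handled trivially).

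**The collinearity step.** The creases $c_{\text{top}}$ and $c_{\text{bot}}$ are collinear (they lie on the same vertical grid line), but they are \emph{not} both contained in a single hole-free rectangle, because the slit sits between them. So I cannot apply Lemma~\ref{lem:rectangularSection} directly to that line. Instead, the key is to look at the two horizontal creases adjacent to the slit --- the ones bounding the two unit faces that flank the slit on each side. Consider the $2\times 2$ block whose vertical midline contains the slit. The two squares on (say) the left of the slit form a hole-free $2\times 1$ rectangle, so by Lemma~\ref{lem:rectangularSection} the horizontal crease between them is folded $180^\circ$ or we use the "all creases in one direction" clause; similarly on the right. The real content is a short case analysis: in a folding into the cube, the image of the four faces of the $2\times2$ block around the slit is constrained, and one checks that $c_{\text{top}}$ folded $90^\circ$ forces $c_{\text{bot}}$ folded $90^\circ$ and vice versa, because otherwise the two arms of the polyomino above and below would have incompatible images on the cube (one would be "rolled" and the other "flat" relative to the slit line, which cannot be glued consistently along the degree-one-ish attachment the slit permits). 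I expect this case check to be the main obstacle --- it requires carefully tracking which faces of the cube the squares around the slit can occupy, using that each face of $P'$ maps isometrically to a cube face.

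**The "at most 4 faces" conclusion.** For the furthermore clause, suppose $c_{\text{top}}$ and $c_{\text{bot}}$ are both folded $180^\circ$. Then the slit behaves inertly with respect to those vertical creases: the column immediately left of the slit and the column immediately right of the slit, extended through the rows above and below, form a connected hole-free rectangular region $P''$ once we note that the $180^\circ$ folds at $c_{\text{top}}, c_{\text{bot}}$ mean the two sides of the slit are stacked, so the support of $h$ --- as a rectangular sub-polyomino --- contains no \emph{other} obstruction to being treated via Lemma~\ref{lem:rectangularSection}. More precisely, I would argue that the support of $h$ (a rectangle of dimensions at least $2\times 2$) together with these $180^\circ$ folds on its central vertical line reduces, after performing those folds, to folding a hole-free rectangle of one smaller width; by Lemma~\ref{lem:rectangularSection} applied to the support, all horizontal or all vertical creases are folded $180^\circ$, so the image is a "ring" of at most $4$ faces, leaving (at least) two opposite faces uncovered. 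The cleanest way to phrase this: perform the two $180^\circ$ folds at $c_{\text{top}}$ and $c_{\text{bot}}$ to obtain a hole-free rectangular polyomino $Q$ (the slit closes up), apply Lemma~\ref{lem:rectangularSection} to $Q$, and pull the conclusion back --- the faces covered by the support of $h$ in the original folding equal the faces covered by $Q$, which form a ring of size $\le 4$.

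**Anticipated difficulties and edge cases.** The delicate points are: (i) making the reduction "fold the two $180^\circ$ creases, get a hole-free rectangle" rigorous at the level of facemappings --- one must check the resulting facemapping is consistent and that no incidences are violated, which is routine but needs a sentence; (ii) boundary cases where the slit is close to the edge of the polyomino so that one of the "arms" above or below is shorter than a full square --- here the vertical crease in question simply does not exist, and the statement should be read as vacuously true for that crease, so I would note at the outset that we only assert the claim for creases that actually exist in $P$; (iii) ensuring the $2\times2$ block around the slit is hole-free apart from the slit --- since $h$ is a \emph{simple} slit (a length-$2$ slit, the only relevant case) and we are in the regime where other nearby holes would already have been handled, this is fine, but I would state it as a hypothesis drawn from the ambient setup. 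I would organize the writeup as: (1) reduce to the case all relevant creases exist; (2) prove the $90^\circ$/$180^\circ$ dichotomy for $c_{\text{top}}, c_{\text{bot}}$ by the $2\times2$ case analysis; (3) in the $180^\circ$ case, perform the folds, apply Lemma~\ref{lem:rectangularSection} to the resulting hole-free rectangle, and conclude the $\le 4$-face ring.
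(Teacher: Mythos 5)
Your proposal correctly identifies the central difficulty --- Lemma~\ref{lem:rectangularSection} cannot be applied across the slit --- but it does not then supply the argument that resolves it: the ``short case analysis'' you defer to is exactly the content of the lemma, and the reason you give for it (``incompatible images \dots cannot be glued consistently'') is an assertion, not a proof. The paper's actual mechanism is more concrete than anything in your sketch. Each of the two columns of the $2\times 4$ region around the slit is a hole-free $1\times 4$ strip and therefore maps into a ring of at most $4$ faces of the cube; a $180^\circ$ fold at (say) the crease below the slit identifies two pairs of vertices of the two adjacent squares, which pins down the direction of each column's ring and forces both columns into the \emph{same} ring; the crease above the slit then joins two faces lying in that ring and is perpendicular to the horizontal creases (which all lie along the ring), so it must also be folded $180^\circ$. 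The $90^\circ$ case is the contrapositive. Note also that your proposed $2\times 2$ block around the slit does not even contain the creases $c_{\mathrm{top}}$, $c_{\mathrm{bot}}$ in question (those sit in the rows outside that block), and Lemma~\ref{lem:rectangularSection} does not apply to a $2\times 1$ rectangle: it requires $k,n\ge 2$, and a $2\times 1$ strip has no pair of collinear interior creases anyway.

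The ``furthermore'' step in your proposal contains a genuine error: after folding $c_{\mathrm{top}}$ and $c_{\mathrm{bot}}$ by $180^\circ$ the slit does \emph{not} close up into a hole-free rectangle. The flap fold (central horizontal crease of the slit folded $180^\circ$, so the two sides of the slit separate) is precisely a folding in which those two vertical creases are $180^\circ$ yet the folding of the support does not factor through any folding of a hole-free rectangle --- in a hole-free rectangle the collinear horizontal creases through the middle rows would be forced to agree by Lemma~\ref{lem:rectangularSection}, which they need not here. The correct route to ``at most $4$ faces'' is the same ring argument as above: both columns lie in one common ring of $4$ faces, so the support covers at most $4$ faces and leaves two opposite faces uncovered.
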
 
\begin{proof}
Consider Figure \ref{fig:fact1} and assume that the crease between squares $a$ and $b$ is folded $180^\circ$;
we refer to the corner in square $x$ labelled with the number $i$ by $x_i$.
Observe that each column of $4$ squares in the support of $h$ maps to at most 4
faces of the cube, and that these faces must all be part of a ring of squares around the cube. 
Further note that $a_1$ and $b_0$ map to the same vertex,
as do $a_2$ and $b_3$. These vertices define the direction of the ring around the cube that each column could map to.
Thus, both columns must map to the same ring of $4$ faces, and all horizontal creases lie in this ring.

The edge between faces $g$ and $h$ is perpendicular to the horizontal creases. Since both $g$ and $h$ lie in the ring, this edge must be folded $180^\circ$.

To see that $90^\circ$ folds propagate as well, assume that the crease between $a$ and $b$ is folded $90^\circ$ and the crease between $g$ and $h$ is folded $180^\circ$. By what we showed above, the $180^\circ$ fold must propagate which is a contradiction.
\begin{figure}[htb]
    \centering
    \begin{tikzpicture}
        \fill[gray!10] (0, 0) rectangle (2, 4);
        \draw[gray, thin] (0, 0) grid (2, 4);
        \draw[blue, thick] (0, 0) rectangle (2, 4);
        \draw[blue, very thick] (1, 1) -- (1, 3);

        \foreach \l [count=\index] in {a,...,h}
            \pgfmathsetmacro\x{Mod(\index,2)==0?1.5:0.5}
            \pgfmathsetmacro\y{div(\index+1,2)-0.5}
            \path (\x, \y) node {\l}
            (\x+0.35, \y+0.25) node {0}
            (\x-0.35, \y+0.25) node {1}
            (\x-0.35, \y-0.25) node {2}
            (\x+0.35, \y-0.25) node {3};
    \end{tikzpicture}
    \caption{$180^\circ$ folds and vertical holes}
    \label{fig:fact1}
\end{figure}
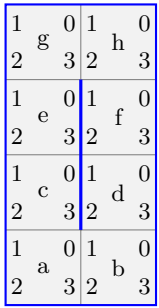

\end{proof}

\begin{lem}
    \label{lem:fact2}
    Let $h$ be a vertical slit. The horizontal creases directly left and right of the centre of $h$ are either both folded $180^\circ$ or both folded $90^\circ$ under any folding into the cube. 
\end{lem}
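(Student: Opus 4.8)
I plan to argue entirely at the level of the facemapping $\varphi$ induced by a given folding into $\mathcal C$, so the proof will be purely combinatorial. Use the notation of Figure~\ref{fig:fact1}: the support of the slit is the $2\times 4$ rectangle with squares $a,\dots,h$, the slit separates $c$ from $d$ and $e$ from $f$, the creases $C_L,C_R$ are those between $c,e$ and between $d,f$, the other horizontal creases of the support are $ac,bd,eg,fh$ (named by the squares they join), and $ab,gh$ are the vertical creases of the support directly below and above the slit; the goal is to show $C_L$ is a $180^\circ$ crease iff $C_R$ is. I would first record the tools. Partition the twelve edges of the cube into three \emph{direction classes} (parallel edges in the same class). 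A crease between two squares is folded $180^\circ$ exactly when they have equal $\varphi$-image and $90^\circ$ exactly when their images are distinct adjacent faces. Since $\varphi$ is an isometry on each square, it sends opposite edges of a square to opposite (hence distinct, same-class) edges of the image face and adjacent edges to adjacent (hence distinct, different-class) edges; also $\varphi$ sends a polyomino vertex to a cube vertex, which meets exactly one edge of each class, and sends a crease incident to a vertex $v$ to a cube edge incident to $\varphi(v)$. Applying the ``opposite edges'' fact to $c$ and to $e$ gives that $\varphi(C_L),\varphi(ac),\varphi(eg)$ share a class $Q_L$; likewise $\varphi(C_R),\varphi(bd),\varphi(fh)$ share a class $Q_R$.

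By Lemma~\ref{lem:fact1} the creases $ab$ and $gh$ are both $180^\circ$ or both $90^\circ$, so I would treat these two cases. If both are $180^\circ$, then $\varphi(a)=\varphi(b)$, so at the bottom tip $V_1=\varphi((1,1))$ of the slit both $\varphi(ac)$ and $\varphi(bd)$ are ``the edge of $\varphi(a)$ at $V_1$ other than $\varphi(ab)$'', whence $\varphi(ac)=\varphi(bd)=:E$; similarly $\varphi(eg)=\varphi(fh)=:E'$. A pair of distinct parallel cube edges that \emph{are} opposite edges of a common face determines that face, so $\varphi(c)$ is determined by the pair $(E,\varphi(C_L))$, and likewise $\varphi(d)$ by $(E,\varphi(C_R))$, $\varphi(e)$ by $(E',\varphi(C_L))$, $\varphi(f)$ by $(E',\varphi(C_R))$. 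Hence if $\varphi(C_L)=\varphi(C_R)$ then $\varphi(c)=\varphi(d)$ and $\varphi(e)=\varphi(f)$, so $C_L$ is $180^\circ\iff\varphi(c)=\varphi(e)\iff\varphi(d)=\varphi(f)\iff C_R$ is $180^\circ$. If $\varphi(C_L)\ne\varphi(C_R)$ and, say, $C_L$ is $180^\circ$, then $\varphi(c)=\varphi(e)$ forces $E=E'$ (each is the edge opposite $\varphi(C_L)$ in $\varphi(c)$), so $\varphi(d)$ and $\varphi(f)$ both lie on a face containing $E$; neither can equal $\varphi(c)$ (that would force $\varphi(C_L)=\varphi(C_R)$), so $\varphi(d)=\varphi(f)$ and $C_R$ is $180^\circ$ too. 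The mirror statement is symmetric, finishing this case.

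When $ab$ and $gh$ are both $90^\circ$, I would work at $V_1$ again. Since $ab$ is $90^\circ$, the classes of $\varphi(ab),\varphi(ac),\varphi(bd)$ are the three distinct classes, so $\varphi(ab)$ has the third class and $\varphi(a)$ is a face perpendicular to the $Q_R$-direction; the same holds for $\varphi(g)$ at the top, and symmetrically $\varphi(b),\varphi(h)$ are perpendicular to the $Q_L$-direction. Consequently each $90^\circ$ fold in the left column $a,c,e,g$ joins two faces across a $Q_L$-edge, so this column folds as a walk inside the bipartite $4$-cycle of faces incident to $Q_L$-edges, beginning and ending on the ``perpendicular-to-$Q_R$'' side; hence the number of $90^\circ$ folds among $\{ac,C_L,eg\}$ is even, and symmetrically the number among $\{bd,C_R,fh\}$ is even. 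Suppose for contradiction $C_L$ is $180^\circ$ and $C_R$ is $90^\circ$. Then $ac,eg$ are both $180^\circ$ or both $90^\circ$, and exactly one of $bd,fh$ is $90^\circ$ (say $bd$, up to the top--bottom symmetry of the statement). In each of these sub-cases I would propagate $\varphi$ around the eight squares of the support using only the facts above; in both, $\varphi(d)$ gets pinned down twice, once through the chain $b$ and once through the chain $f$, and the two values disagree, the ultimate obstruction being that the two determinations would force $\varphi(bd)$ and $\varphi(C_R)$ to be the common edge of a prescribed pair of faces although they are opposite edges of the square $d$, hence distinct.

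The genuinely laborious step is this last ($90^\circ$) case: every single deduction is immediate, but one must chase $\varphi$ through all eight squares in each sub-case and check it always collides with the contradiction above, and some care is needed to phrase cleanly the principle ``a column of the support folds as a walk that crosses only edges of one fixed direction class, so the parity of its number of $90^\circ$ folds is constrained''. Everything else—the two main cases of Lemma~\ref{lem:fact1}, and the case $\varphi(C_L)\ne\varphi(C_R)$ inside the $180^\circ$ case—is quick once the tools of the first paragraph are set up.
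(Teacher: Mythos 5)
Your first main case (vertical creases $ab$, $gh$ both $180^\circ$) is correct and complete, and the parity observation in the $90^\circ$ case is valid. But the proof as written has a genuine gap: the entire $90^\circ$ case is deferred to an unexecuted chase of $\varphi$ through the eight squares of the support, with the contradiction merely asserted ("the two values disagree"). That chase is not a formality: working through the sub-case $ac,eg$ both $180^\circ$, $bd$ folded $90^\circ$, $fh$ folded $180^\circ$ in explicit coordinates, the collision does occur, but not quite in the form you predict --- one finds instead that the two chains assign incompatible images to the shared vertex $\varphi(d_1)=\varphi(f_2)$ at the right side of the slit's midpoint (equivalently, the two determinations of $\varphi(C_R)$, as the edge of $\varphi(d)$ opposite $\varphi(bd)$ versus the edge of $\varphi(f)$ opposite $\varphi(fh)$, disagree). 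Since the obstruction you name is not the one that actually materialises, the reader cannot take the remaining sub-cases on faith; they must each be worked out, and until they are, the lemma is not proved.

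More importantly, the whole case structure is unnecessary, and you have missed the one observation that the paper's proof rests on: the two tips of the slit, $(1,1)$ and $(1,3)$ in Figure~\ref{fig:fact1}, are each a \emph{single} vertex of the polyomino shared between the left and right columns (the slit separates the interior edges but not its endpoints), so $c_3=d_2$ and $e_0=f_1$ as points of $P$. If $C_L$ is folded $180^\circ$, the reflection of $e$ onto $c$ gives $\varphi(e_0)=\varphi(c_3)$, hence $\varphi(f_1)=\varphi(d_2)$; but under a $90^\circ$ fold at $C_R$ these two corners land on distinct vertices of the two adjacent faces $\varphi(d)\neq\varphi(f)$. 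So $C_R$ must be $180^\circ$, and the $90^\circ$ direction follows by contraposition exactly as in your closing remark to Lemma~\ref{lem:fact1}. This bypasses Lemma~\ref{lem:fact1}, the direction classes, and all sub-cases. I would recommend replacing your argument with this one; if you prefer to keep your route, you must at minimum write out the $\varphi$-chase for each sub-case of the $90^\circ$ branch.
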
 
\begin{proof}
Again, consider Figure \ref{fig:fact1}. Any $180^\circ$ fold through the crease between $c$ and $e$ forces $f_1$ and $d_2$ to map to the same vertex and thus the edge between $d$ and $f$ must fold $180^\circ$. Just like in Lemma \ref{lem:fact1} we can apply a proof by contradiction to show that $90^\circ$ folds must necessarily propagate too.
\end{proof}

\begin{lem}
\label{lem:fact3}
If a vertical slit is folded $90^\circ$ vertically, it cannot have a $90^\circ$ fold through the center of the hole horizontally.
\end{lem}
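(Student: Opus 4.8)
The plan is to combine the two propagation lemmas just established with the structure of a $90^\circ$-folded column. Suppose for contradiction that a vertical slit $h$ is folded $90^\circ$ vertically (so the vertical creases above and below $h$ are both $90^\circ$ by Lemma \ref{lem:fact1}) and also has a $90^\circ$ fold through the horizontal crease at the centre of the hole. By Lemma \ref{lem:fact2}, this forces both horizontal creases immediately left and right of the centre of $h$ to be folded $90^\circ$ as well. So, referring to the labelling in Figure \ref{fig:fact1}, the creases $a\!-\!b$, $g\!-\!h$, $c\!-\!e$ and $d\!-\!f$ are all $90^\circ$, and in addition the central crease of the slit itself (the horizontal crease separating $c,d$ from $e,f$, which runs through the slit) is $90^\circ$.

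First I would track where the four squares $c,d,e,f$ surrounding the hole must map. Since $c\!-\!e$ and $d\!-\!f$ are $90^\circ$ folds, the pair $\{c,e\}$ occupies two adjacent faces of the cube sharing the edge that is the image of the crease $c\!-\!e$, and likewise $\{d,f\}$. The key observation is that the crease through the centre of the slit is the common edge of $c$ and $d$ on one side, and of $e$ and $f$ on the other side — but because of the slit, these are two distinct copies of that edge in the polyomino, which nonetheless must map to the \emph{same} edge of the cube (the slit does not separate $c$ from $d$ along the top, nor $e$ from $f$ along the bottom; rather the vertical slit runs between the columns). Here I need to be careful about the geometry: the crease "through the centre of the hole horizontally" is the horizontal segment at mid-height crossing both columns, and folding it $90^\circ$ means $c,d$ (top half) and $e,f$ (bottom half) lie on faces meeting along the image of that horizontal segment.

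Now the contradiction: the images of $c$ and $e$ meet along a horizontal edge (the $90^\circ$ fold $c\!-\!e$ through the slit on the left column), and the images of $c$ and $d$ meet along the same horizontal edge at mid-height — but $c\!-\!e$ and $c\!-\!d$ are the two halves of the boundary of $c$ that emanate in \emph{opposite} directions (down and right, say) from a common corner, so their images are two distinct edges of the face $c$ maps to. Tracing around, $e$ and $f$ must then each map to a face adjacent to $c$'s face along one of these two edges; combined with the $90^\circ$ folds $a\!-\!b$ and $g\!-\!h$ propagated from Lemma \ref{lem:fact1}, one gets an overconstrained system: some corner of the hole (a vertex shared by all four of $c,d,e,f$ — this exists since the slit only separates one pair of edges) is forced to have four faces meeting at it under $\varphi$, or equivalently two creases at that corner that cannot both be $90^\circ$. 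Writing this out in terms of the corner labels $x_i$ as in the proofs of Lemmas \ref{lem:fact1} and \ref{lem:fact2} should yield an explicit impossibility.

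I expect the main obstacle to be purely bookkeeping: pinning down exactly which edges of the polyomino near the slit are identified under a flat folding blueprint and which are not (the slit breaks one adjacency but not the others), and then showing that the resulting constraints on the images of $c,d,e,f$ around their common corner are contradictory. The cleanest route is probably to argue at the level of a single vertex — the midpoint vertex on the side of the slit — and show that the four creases incident to it in the support of $h$ cannot simultaneously all be $90^\circ$, since four $90^\circ$ folds at a single interior vertex of a folded state is impossible (only zero, two opposite, or all-$180^\circ$ configurations occur at a vertex, by the standard flat-foldability/Kawasaki-type constraints implicit in the model). If such a vertex-based argument is available it makes the proof a one-liner; otherwise I fall back on the explicit corner-chase above.
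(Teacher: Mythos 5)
Your overall strategy --- assume both families of $90^\circ$ folds, propagate them with Lemmas \ref{lem:fact1} and \ref{lem:fact2}, and chase the images of the corners of $c,d,e,f$ from Figure \ref{fig:fact1} to a contradiction --- is the right one, but the corner-chase as written rests on two false geometric premises, and your fallback argument has nothing to apply to. First, $c$ and $d$ (and likewise $e$ and $f$) do \emph{not} share an edge: the slit is exactly the segment separating them, so there is no crease $c$--$d$, and the mid-height horizontal line is not ``the common edge of $c$ and $d$''; it is the union of the two \emph{independent} creases $c$--$e$ and $d$--$f$, which a priori may map to different edges of the cube. Second, there is no ``vertex shared by all four of $c,d,e,f$'': the midpoint of the slit is an interior point of the cut and is therefore doubled in the polyomino, the left copy being incident only to $c,e$ and the right copy only to $d,f$. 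This doubling is the whole reason the lemma does not follow immediately from Lemma \ref{lem:rectangularSection}, so the proposed one-liner (``four $90^\circ$ creases cannot meet at an interior vertex'') cannot be run: no vertex of the polyomino carries all four $90^\circ$ creases, and the $90^\circ$ creases $a$--$b$ and $g$--$h$ are incident to the slit's \emph{endpoints}, not to its midpoint.

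The paper's proof supplies exactly the ingredients your sketch is missing. It uses that the two endpoints of the slit are genuine single vertices, so the $90^\circ$ folds $c$--$e$ and $d$--$f$ force their images to be diagonally opposite corners $\alpha,\gamma$ of a face of the cube; it then shows that the two copies of the slit's midpoint must map to \emph{distinct} vertices (if they coincided, the slit would fold trivially and Lemma \ref{lem:rectangularSection} would already forbid the crossing $90^\circ$ folds), hence to the remaining corners $\beta$ and $\delta$; a short case analysis then pins $c$ and $d$ to two specific side faces, after which the images of the outer bottom corners of $c$ and $d$ are forced to coincide, contradicting the $90^\circ$ fold of the vertical crease below the slit. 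To salvage your write-up you would need to (i) replace the nonexistent shared centre vertex by the pair of undoubled endpoint vertices together with the two midpoint copies, and (ii) add the trivial-folding dichotomy for those midpoint copies; without these steps the argument does not close.
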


\begin{proof}
To see why this is true, assume towards a contradiction that the horizontal creases next to the centre of the slit are folded $90^\circ$. 

\begin{figure}
    \centering
        \begin{tikzpicture}
            \fill[gray!10] (0, 0) rectangle (1, 1);
            \draw[blue, very thick] (0, 0) rectangle (1, 1);
            \draw[red, very thick] (0, 1.2) -- (1, 1.2);
            \draw[red, very thick] (-0.2, 0) -- (-0.2, 1);

            \path (0.2, 0.8) node {$\alpha$}
                (0.8, 0.8) node {$\beta$}
                (0.8, 0.2) node {$\gamma$}
                (0.2, 0.2) node {$\delta$}
                (0.5, 1.4) node {c}
                (-0.4, 0.5) node {d}
                (-0.4, 1) node {*}
                (0, 1.4) node {*};
        \end{tikzpicture}
    \caption{Top-down view of the cube in Lemma \ref{lem:fact3}.}
    \label{fig:fact3}
\end{figure}
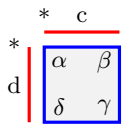
Consider squares $c$ and $e$ in Figure \ref{fig:fact1}. 
The presence of the $90^\circ$ folds in between them necessarily causes $c_3$ and $e_0$ to be mapped to opposite sides of some face of the cube, that is, vertices $\alpha$ and $\gamma$ in Figure \ref{fig:fact3}.

Also notice that if $c_0$ and $d_1$ are mapped to the same vertex on the cube,
then the slit would fold trivially, and by Lemma \ref{lem:rectangularSection},
that makes the two $90^\circ$ folds impossible. Thus, the two vertices must map to different vertices on the cube, and therefore 
must map to vertices $\beta$ and $\delta$ in Figure \ref{fig:fact3}.

If square $c$ mapped to the face of the cube shown in Figure \ref{fig:fact3},
then $c_1$ would map to vertex $\gamma$, but this would mean that both $e_0$ and $e_2$ are mapped to $\gamma$ which is impossible. A similar argument works for square $d$,
and so both squares must map to the faces on the sides of the cube as indicated in Figure \ref{fig:fact3}.

But now, this forces vertices $c_2$ and $d_3$ to be mapped to the same vertex of the cube, and thus the vertical fold between $c$ and $d$ must be folded $180^\circ$.
\end{proof}

Before moving on we need another definition.
\begin{defn}
Call two vertical slits \emph{even-separated}, if there is an even number of rows between their respective midpoints, and \emph{odd-separated} otherwise. Similarly, call two horizontal slits \emph{even-separated} or \emph{odd-separated}, depending on whether there is an even or odd number of columns between their respective midpoints.
\end{defn}

\begin{thm}
    \label{thm:slitsnonfolding}
    Let $P$ be a rectangular polyomino
    containing only slits of length two such that any two horizontal or vertical slits are even-separated. Then $P$ is not cube-foldable.
\end{thm}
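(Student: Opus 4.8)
\section*{Proof proposal}

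The plan is a proof by contradiction: assume $P$ admits a folding onto $\mathcal C$, fix such a folding, and let $\varphi,l$ be the induced face- and layermappings; the aim is to exhibit a face of the cube that $\varphi$ misses.

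The first step is to describe the crease pattern of any folding of $P$ into $\mathcal C$. Applying Lemma~\ref{lem:rectangularSection} to the hole-free sub-polyominoes of $P$ forces collinear creases to share a type ($90^\circ$ or $180^\circ$) and forces one of the two directions to be folded entirely by $180^\circ$ in each such sub-polyomino, while Lemmas~\ref{lem:fact1} and~\ref{lem:fact2} (together with their mirror statements for horizontal slits) transport the type of a crease past a length-$2$ slit. The upshot is that each maximal line of creases of $P$ is folded uniformly; call a row or column of creases \emph{good} if it is all-$180^\circ$ and \emph{bad} if it is all-$90^\circ$. I will also record two consequences. (i) Since $P$ is connected, no slit can run the full length of a grid line, so along any interior grid line at least one crease survives; combining this with Lemma~\ref{lem:rectangularSection} and the uniformity, a bad crease cannot lie in the interior of a hole-free $2\times k$ block (such a block would have the perpendicular direction all-$180^\circ$, hence all its short columns mapping to a single face, hence every crease of the block folded by $180^\circ$) — every bad line must ``hug'' slits. (ii) A slit carrying $180^\circ$ folds on both sides along its direction has, by Lemma~\ref{lem:fact1}, its support confined to a ring of four faces.

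Now the even-separation hypothesis enters, via Lemma~\ref{lem:fact3}. By Lemmas~\ref{lem:fact1} and~\ref{lem:fact3}, a vertical slit lying on a bad column must be a flap: its central horizontal crease, located at the slit's midpoint level, is folded $180^\circ$; dually, a bad row that meets a vertical slit can only do so at that slit's midpoint level. By hypothesis all midpoint levels of vertical slits have one parity $\pi_v$, and all midpoint columns of horizontal slits one parity $\pi_h$. Fed into consequence (i), this pins every bad row to a row of parity $\pi_v$ and every bad column to a column of parity $\pi_h$, while every slit not responsible for a bad line falls under consequence (ii). So in the folding, all of the ``non-$180^\circ$'' activity is concentrated at one parity of rows and one parity of columns, and all remaining slits are four-ring-confined.

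It remains to turn this into an uncovered face, and the strategy is to show that $P$ must then fold onto $\mathcal C$ essentially like the hole-free rectangle $\bar P$ obtained by filling in its slits, which is impossible since no hole-free rectangular polyomino is cube-foldable. When there are no bad lines, every present crease is folded $180^\circ$, the slits (cut along lines carrying $180^\circ$ folds on both sides) fold trivially, and the whole folded state transfers to $\bar P$, a contradiction. When bad lines do occur, the previous paragraph confines them — and the flap slits producing them — to rows of parity $\pi_v$ and columns of parity $\pi_h$, with all other slits four-ring-confined; a case analysis of how these finitely many exceptional features can interact (in essence: breaks occurring at a single parity of rows, set against the four-ring confinement, cannot supply the sixth face) shows that $\varphi$ again realizes at most five faces. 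Either way the folding is not onto. I expect this last case analysis to be the genuine obstacle: converting ``all exceptional behaviour is concentrated at one parity'' into a concrete missing face requires a careful simultaneous treatment of the $180^\circ$-paired rows, the flap slits, the ring-confined slit supports, and the hole-free regions. The structural reduction of the earlier steps is routine given Lemmas~\ref{lem:rectangularSection}--\ref{lem:fact3}, as is disposing of the degenerate small cases — polyominoes too small to contain a $2$-slit with the room those lemmas need, and the single-slit case, which is already known.
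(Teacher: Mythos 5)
Your structural first half tracks the paper's proof: you use Lemma~\ref{lem:rectangularSection} together with Lemmas~\ref{lem:fact1} and~\ref{lem:fact2} to make each grid line of creases uniform, and Lemma~\ref{lem:fact3} plus even-separation to force all $90^\circ$ activity onto a single parity class. (The paper phrases this as a disjunction -- either every \emph{odd} horizontal crease or every odd vertical crease is folded $180^\circ$ -- which is all it needs; your stronger conjunctive claim that bad rows \emph{and} bad columns are each pinned to one parity is more than is required and is not fully justified.) But the proof has a genuine gap exactly where you flag it: the step ``a case analysis \dots shows that $\varphi$ again realizes at most five faces'' is the entire content of the theorem, and you do not supply it. The paper's argument here is not a case analysis but a single clean induction: after fixing that all odd horizontal creases fold $180^\circ$, the bottom row maps into a ring of four faces whose two remaining faces are ``top'' and ``bottom''; one then shows by induction along the odd horizontal creases that every \emph{odd-aligned} vertex (intersection of an odd horizontal with an odd vertical crease) maps to the top of the ring, the inductive step using that at each such vertex at least one side (left or right) has its two squares still joined across the intervening $180^\circ$-folded crease, so the vertex two rows up lands on the same cube vertex. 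Since every square of $P$ touches an odd-aligned vertex, no square can map to the bottom face. Without this (or an equivalent) argument, nothing in your write-up rules out the sixth face being covered, e.g.\ by squares near a flap-folded slit.

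A secondary but concrete error: your consequence (i) is false. Lemma~\ref{lem:rectangularSection} applied to a hole-free $2\times k$ block only says that \emph{one} of the two crease directions is entirely $180^\circ$; it is perfectly possible for the single long crease of such a block to be folded $90^\circ$ while the perpendicular creases are all $180^\circ$ (this happens constantly in actual foldings, e.g.\ wrapping a $1\times n$ strip around the cube after collapsing the other direction). So bad lines need not ``hug'' slits, and the deduction you build on (i) does not go through as stated. The paper avoids this entirely by never constraining the \emph{even} creases: it only needs the odd ones in one direction to be $180^\circ$, and extracts the contradiction from how those $180^\circ$ folds identify vertices, not from where the $90^\circ$ folds can live.
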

\begin{proof}
    We will prove this by showing that for any polyomino matching the above description, there is no consistent facemapping which is onto. 

    Assume towards a contradiction that there is some consistent facemapping, $\varphi$, for a polyomino, $P$, containing only even-separated slits. 
    
    Note that we may assume without loss of generality that there is at least one horizontal and at least one vertical slit: 
    if not, add two rows and two columns with a horizontal or vertical slit,
    respectively and note that if $P$ has a surjective, consistent facemapping, then so does the resulting polyomino. 
    Call a horizontal crease \emph{even} if there is an even number of rows between that crease and the midpoint of any (equivalently: every) vertical slit and \emph{odd} otherwise. 
    Call a vertical crease \emph{even} if there is an even number of columns between that crease and the midpoint of any (equivalently: every) horizontal slit and \emph{odd} otherwise. 
    
    We first show that either every odd horizontal, or every odd vertical 
    crease must be folded $180^\circ$. Consider the odd horizontal creases.
    By Lemma \ref{lem:fact2}, the type of fold ($90^\circ$ or $180^\circ$) along these creases must propagate over the whole width of the polyomino.

    If one of these folds is $90^\circ$, then consider the odd vertical creases.
    Clearly if these creases directly intersect the $90^\circ$ fold (or intersect at a trivially folding hole)
    then they must fold $180^\circ$ (by Lemma \ref{lem:rectangularSection}). If the intersection point is at a non-trivially folding hole, in the case where the 
    hole is horizontal, and thus in-line with the $90^\circ$ fold, then by Lemma \ref{lem:fact3}, the vertical crease cannot be folded $90^\circ$ and so must be folded $180^\circ$. If the intersection point is a hole perpendicular to the $90^\circ$ fold, then we have a $90^\circ$ folded crease passing 
    through the middle of a slit, and by Lemma \ref{lem:fact3} again, the vertical crease must be folded $180^\circ$. 
    Thus we may without loss of generality assume that every odd horizontal crease folds $180^\circ$. 

    The bottom row of squares (viewed in isolation) is a $1\times n$ strip, and so it can map to at most $4$ faces of the cube, in a ring shape. 
    Clearly, the presence of other squares will not change this fact.  
    Notice that this ring leaves at least two opposite faces of the cube uncovered,
    and that every vertex on odd  horizontal crease of this $1\times n$ strip (which may be the boundary of the polyomino) maps to the same side of this ring of $4$ faces.
    We call the yet uncovered face on this side of the cube the top of the cube, and the other face the bottom.

    Our goal from this point is to show that every square on the polyomino has at least one vertex which maps to the top of the cube. Clearly, it is enough to show that all vertices on the intersection of odd horizontal creases and odd vertical creases map to the top of the cube. We call these vertices \emph{odd-aligned}.

    We proceed by induction on the odd horizontal creases. The base case is trivial. We already know that the first odd horizontal crease maps to the top side of the ring, and thus all odd-aligned vertices in this crease do so, too.

    In the inductive step, assume that in the first $n$ odd horizontal creases, all odd-aligned vertices map to the top of the cube.
    Consider the even horizontal crease in between the $n$th and $(n+1)$th odd horizontal creases.
    For each odd aligned vertex on this crease, either the two squares at the top and bottom left or the two squares at the top and bottom right of this vertex must be attached to each other via an edge. Now, because there is a $180^\circ$ fold through this edge, the two squares we identified map to the same faces as each other. In particular, this means that the vertices above and below our vertex on the central crease map to the same point which must be at the top of the cube.

    So, each odd-aligned vertex maps to the same side of the ring. Now every square in the polyomino touches an odd-aligned vertex, as observed previously, and so at least one of the square's vertices maps to the top side of the ring. If a square was to map to the bottom face of the cube, then all four of its vertices would have to map to the bottom side of the ring. Thus no square maps to 
    the bottom face of the cube. 
\end{proof}

\begin{cor}
    \label{cor:bigslitsonly}
    Let $P$ be a rectangular polyomino containing only slit holes, such that its width and height are 
    both $\ge 6$. Then $P$ folds into a cube if and only if some two vertical or two horizontal holes are odd-separated.
\end{cor}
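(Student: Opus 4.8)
The plan is to prove the two implications of the equivalence separately; the hypothesis that both dimensions are at least $6$ will only be needed for one of them. The implication ``if $P$ is cube-foldable, then some two vertical or some two horizontal slits are odd-separated'' is exactly the contrapositive of Theorem~\ref{thm:slitsnonfolding}: if no two vertical slits and no two horizontal slits of $P$ are odd-separated, then any two vertical slits and any two horizontal slits of $P$ are even-separated, and Theorem~\ref{thm:slitsnonfolding} then shows $P$ is not cube-foldable. This direction uses nothing about the size of $P$.

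For the converse, assume -- using the symmetry between the two axes -- that $P$ has width and height at least $6$ and contains two odd-separated vertical slits $h_1$ and $h_2$. I would exhibit an explicit valid folding of $P$ onto the cube. The blueprint is: fold the two horizontal crease-lines through the midpoints of $h_1$ and $h_2$ by $90^\circ$ and all other horizontal creases by $180^\circ$, so that $P$ collapses into three horizontal strips joined by two $90^\circ$ folds, each strip wrapping around the cube, and then choose the vertical creases (using the freedom left by Lemma~\ref{lem:rectangularSection}) so that, together with the two activated slits, the three strips cover all six faces. Since every slit other than $h_1,h_2$ lies inside one of the collapsed strips and hence in a region folded entirely by $180^\circ$ creases in a single direction, one checks that it folds trivially in the sense of Definition~\ref{def:trivialhole}, so this folding certifies that $P$ is cube-foldable. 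As elsewhere in the paper, I would present the folding as a pseudo-folding (a facemapping together with a layer mapping, drawn as a figure) whose validity is verified by folding a paper model. The height bound supplies enough rows above and below the slits, while the width bound is what still leaves a full equatorial ring of faces covered once the columns used near the two slits are accounted for; the $4\times n$ and $5\times n$ phenomena mentioned in the introduction show that neither bound can be lowered.

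The heart of the argument -- and the step I expect to be the main obstacle -- is the behaviour of the activated slits, and in particular why odd-separation is the exact condition. After the $180^\circ$ folds within a strip, consecutive rows of vertices alternate between the two boundary loops of the band that strip wraps around (as in the proof of Theorem~\ref{thm:slitsnonfolding}), so the flap a slit peels off is forced onto one of the two opposite faces left uncovered by that strip according to the parity of the row carrying the slit's midpoint; two slits whose midpoints differ by an odd number of rows therefore send their flaps to opposite faces and jointly cover both, whereas even-separated slits would pile both flaps onto the same face -- precisely the obstruction behind Theorem~\ref{thm:slitsnonfolding}. Turning this into a complete proof requires (i) verifying that the local folding at each slit is compatible with the ambient $180^\circ$/$90^\circ$ pattern without violating the crease-propagation forced by Lemmas~\ref{lem:rectangularSection}, \ref{lem:fact1} and \ref{lem:fact2}, which is what fixes how the flap is routed and how much room around each slit is needed, and (ii) dispatching the finitely many cases for the horizontal offset and the (odd) vertical offset of $h_1$ and $h_2$ -- a large vertical offset leaves little vertical slack, but then the two slits are far enough apart that their flaps can be routed independently. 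Finally, combining the two implications recovers the complete characterisation of minimally cooperating sets of slits for $\ge 6\times 6$ polyominoes mentioned in the introduction: any cooperating set must contain an odd-separated pair of one orientation, and (after choosing $h_1,h_2$ and the folding appropriately when there are further slits) such a pair already cooperates.
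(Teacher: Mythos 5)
The easy direction of your argument is fine and matches the paper: non-foldability when all pairs are even-separated is exactly Theorem~\ref{thm:slitsnonfolding}, and indeed needs no size hypothesis. The overall skeleton of the converse (exhibit an explicit folding, after reducing to boundedly many configurations) is also the paper's strategy. But your proposed folding blueprint does not work, and the mechanism you describe is not the one that actually makes odd-separated slits cooperate.

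Concretely: if the horizontal crease through the midpoint of $h_1$ is folded $90^\circ$, then by Lemma~\ref{lem:fact2} and Lemma~\ref{lem:rectangularSection} this $90^\circ$ fold propagates across the entire width of $P$; Lemma~\ref{lem:rectangularSection} applied to any hole-free $2\times2$ square on that crease then forces every vertical crease crossing it to be folded $180^\circ$, and this in turn propagates along the full height of each vertical crease except where a slit interrupts it. With only the slits of $P$ available as interruptions, at most two vertical creases can carry a $90^\circ$ fold anywhere, so none of your three horizontal strips can ``wrap around the cube'' (a collapsed $1\times n$ strip with only $180^\circ$ vertical folds covers at most two faces), and the whole configuration covers far fewer than six faces. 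Note also that folding the central horizontal crease of a slit by $90^\circ$ is by definition the \emph{ring} fold, which produces no flap at all, so the parity-of-flaps mechanism you describe is incompatible with your own crease assignment. The paper's construction is essentially dual to yours: the slits are folded as \emph{flaps} (central horizontal crease at $180^\circ$), the two slits sit in a single vertical crease (after reduction), and the $90^\circ$ horizontal folds are confined to the one column adjacent to that crease --- the slits block Lemma~\ref{lem:rectangularSection} from propagating those $90^\circ$ folds into the remaining columns, so that one column wraps vertically around a ring of four faces while the rest collapses onto the bottom row and wraps horizontally; odd separation is what lets the wrapping column accumulate the right number of quarter-turns. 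Finally, your case analysis is not finite as stated: you must first observe (as the paper does) that $180^\circ$ folds reduce any odd vertical separation to $1$ or $3$ rows and any horizontal separation to at most $3$ columns while folding all other slits trivially, which is what brings you down to the seven minimal configurations of Figure~\ref{fig:MinimalSlitPolys} and is also exactly where the $6\times6$ hypothesis enters (it guarantees three free columns on one side of the reduced configuration).
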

\begin{proof} 
    The reverse implication is shown in the above theorem, so it remains to 
    show that if $P$ is as stated above, then it folds.
    Throughout this proof we will assume without loss of generality that all slits are vertical.

    Note that if two slits are odd-separated with more than one row between them, we can apply $180^\circ$ leaving the centres of the holes separated by $3$ rows.
    
    The same can be done in the horizontal direction; for vertical slits with an odd number of columns inbetween  we reduce the horizontal separation to width $1$ or $3$, and for vertical slits with an even number of columns  we reduce the horizontal separation to width $2$. Hence $P$ can be reduced to one of the polyominoes shown in Figure \ref{fig:MinimalSlitPolys}, all of which are foldable.
    \begin{figure}
        \centering
        \input{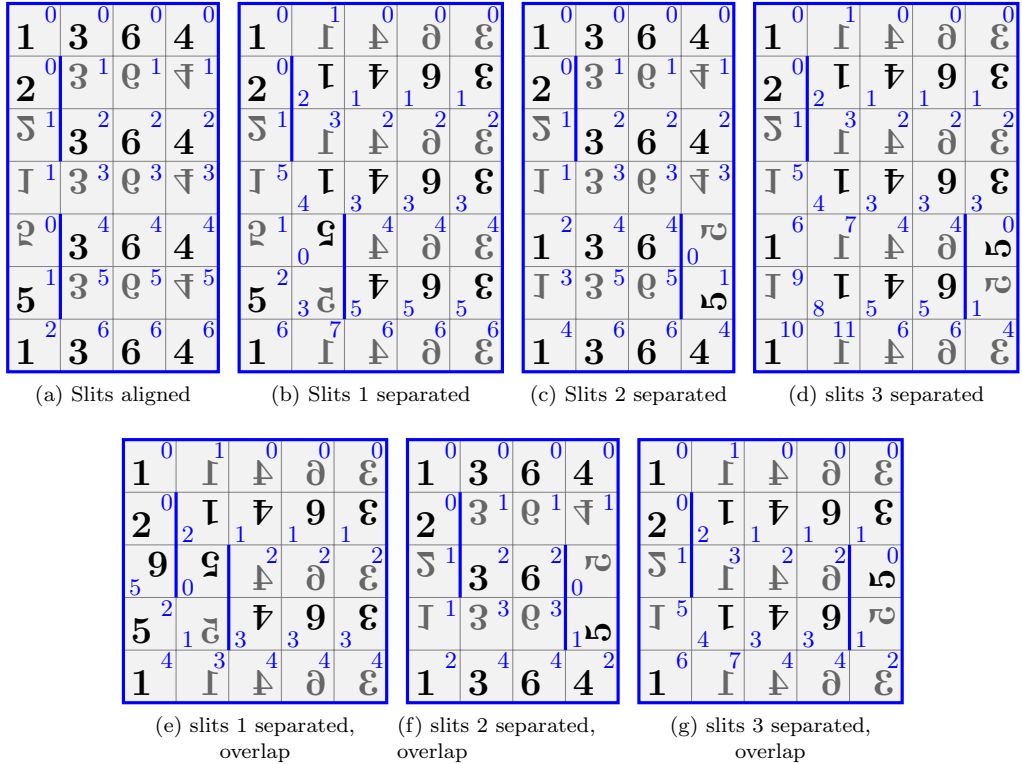}
        \caption{All minimal polyominoes and their foldings}
        \label{fig:MinimalSlitPolys}
    \end{figure}
    Note that without the $6\times 6$ requirement, it might be possible to position 
    holes in the same or adjacent creases (subfigures \ref{fig:MinimalSlitPolys} a, b, e)
    such that neither side of the holes has $3$ or more empty columns, which would therefore leave these particular hole arrangements non-foldable.
\end{proof}

The above corollary provides a complete classification of almost all polyominoes containing only 2-tall slits, thus leaving only the $k \times n$ cases for $k < 6$ open. Note that we can without loss of generality focus on the case where there is a pair of vertical odd-separated slits. Indeed, if there were odd-separated horizontal slits, rotating the polyomino by 90 degrees, we either end up with a $k \times n$-polyomino for $k < 6$ with odd separated vertical slits, or with polyomino which can be reduced to one of the polyominoes shown in Figure \ref{fig:MinimalSlitPolys}. 

Moreover, if two odd-separated slits were two or more columns apart, we could reduce the polyomino with only these holes to one of the cases in Figure \ref{fig:MinimalSlitPolys}, so this pair of holes cooperates. Thus, for the remainder of this section we assume all holes in odd-separated pairs are oriented vertically, and odd-separated pairs appear in the same crease or adjacent creases. 

We first consider the case where in addition to these odd separated vertical slits, there is also at least one horizontal slit. Note that in this case the width must be at least $4$ in order to accommodate the horizontal slit.

    \begin{lem}
        If a $4\times n$ or $5\times n$ polyomino $P$ containing only slits has two odd-separated vertical slits,
        and a horizontal slit, then it is foldable.
        \label{lem:3slitHorizontal}
    \end{lem}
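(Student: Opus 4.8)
The plan is to prove this by explicit construction, in the same spirit as the proof of Corollary \ref{cor:bigslitsonly}: reduce the polyomino to a bounded list of canonical small configurations using $180^\circ$ accordion folds, and then exhibit, for each such configuration, a valid pseudo-folding onto the cube.

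First I would normalize the positions of the three named holes. There is nothing to do horizontally since $P$ has only $4$ or $5$ columns, but vertically any maximal block of consecutive hole-free rows can be folded flat with $180^\circ$ folds and then accordion-folded onto whatever faces its boundary rows already cover; this is legitimate by Lemma \ref{lem:rectangularSection} applied to the hole-free sub-rectangle. Doing this above, between, and below the holes brings the two odd-separated vertical slits to minimal vertical separation (one row between their midpoints, so a vertical span of four rows plus one buffer row on each side) and moves the horizontal slit as close to them as possible. Up to rotation and reflection this leaves only finitely many configurations, indexed by whether the two vertical slits lie in the same column or in adjacent columns, their relative vertical offset, the column position of the horizontal slit (essentially forced when the width is $4$), and the width itself ($4$ or $5$). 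For each configuration I would provide a figure displaying a consistent onto facemapping together with a compatible layer mapping, whose validity is checked by folding a paper model, exactly as in Figure \ref{fig:MinimalSlitPolys}. Finally, to pass back to a general $P$: the support of the three holes folds onto all six faces of the cube, and every row of $P$ outside this support lies in a hole-free block (or contains only slits, which fold trivially under $180^\circ$ folds), so those rows accordion-fold onto already-covered faces, giving a folding of $P$ onto the cube.

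I expect the main work, and the main obstacle, to be the case analysis: actually producing and verifying an explicit folding in each canonical configuration, and in particular choosing the layer mapping so that no self-intersection occurs. The conceptual point guiding the construction is the observation (implicit in the remark following Corollary \ref{cor:bigslitsonly}, and extractable from Lemmas \ref{lem:fact1}--\ref{lem:fact3}) that in a width-$4$-or-$5$ polyomino an odd-separated pair of vertical slits need not cooperate on its own, because the slits cannot be spread far enough apart horizontally to wrap the cube; the construction must therefore genuinely exploit the horizontal slit to supply the extra $90^\circ$ turn around the cube that the missing columns would otherwise provide, which in practice dictates that the vertical creases above and below the vertical slits be folded by $90^\circ$ rather than $180^\circ$. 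A secondary technical point is to make sure that any additional slits of $P$ beyond the three named ones either lie outside the support or can be folded trivially, so that the reduction step is not obstructed.
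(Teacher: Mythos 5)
Your proposal is correct and follows essentially the same route as the paper: normalize the three holes by $180^\circ$ accordion folds on hole-free blocks of rows, reduce to a finite list of minimal configurations, and exhibit an explicit pseudo-folding for each (the paper's Figure \ref{fig:slitsHorzontal}). The only divergence is that the paper additionally performs one vertical $180^\circ$ fold to collapse the $5\times n$ case to a $4\times n$ configuration (justified by noting the odd-separated vertical slits must then lie in the two central creases), whereas you would enumerate the width-$5$ configurations separately -- a harmless difference that just enlarges your case list.
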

    \begin{proof}
        Each of these polyominoes can be reduced to one of the minimal ones shown in Figure \ref{fig:slitsHorzontal}. For $4 \times n$-polyominoes, this can be achieved with a series of horizontal $180^\circ$ folds; for $5 \times n$ polyominoes we need an additional vertical fold which is always possible due to the fact that odd-separated vertical slits must appear in the central two creases as otherwise we could ignore the horizontal slit and reduce to one of the cases in Figure \ref{fig:MinimalSlitPolys}.
    \end{proof}

\input{horizontal_slits.tex}

From now on, we can thus assume that all slits are vertical. We start by showing that no $3 \times n$ polyomino is foldable.

\begin{lem}
    \label{lem:impossible-three-by-n}
Let $P$ be a rectangular $3 \times n$ polyomino whose only holes are vertical slits. Then $P$ does not fold.
\end{lem}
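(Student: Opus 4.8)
The plan is to suppose for contradiction that some folding $F$ of $P$ onto the cube exists, and to extract from the slit lemmas enough rigidity of the crease pattern that a three‑column‑wide polyomino cannot reach all six faces. Use coordinates in which the bounding rectangle is $[0,3]\times[0,n]$, so the two interior vertical grid lines are $x=1$ and $x=2$. A $2$-tall vertical slit must lie along one of these lines, and since the endpoints of a slit are interior vertices of the bounding rectangle, no slit can contain the bottom unit edge $\{1\}\times[0,1]$ or the top unit edge $\{1\}\times[n-1,n]$ of the line $x=1$, and similarly for $x=2$. Hence columns $1$ and $2$ are joined in the bottom and the top row, and so are columns $2$ and $3$. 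If $P$ has no slit at all it is a hole-free $3\times n$ rectangle, and Lemma~\ref{lem:rectangularSection} makes all creases in one direction $180^\circ$, so $P$ collapses onto a $1\times 3$ strip or onto a straight $1\times n$ strip and covers at most $4$ faces; so from now on assume $P$ has at least one slit.

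\textbf{Uniformity of the crease pattern.} The next step is to show that in $F$ every unit crease on $x=1$ has a common type $t_1\in\{90^\circ,180^\circ\}$, every unit crease on $x=2$ a common type $t_2$, and every horizontal crease at each height $k$ is folded uniformly, of type $u_k$. Uniformity of a horizontal crease at height $k$ follows because consecutive unit edges of it agree either by Lemma~\ref{lem:rectangularSection} applied to the hole-free $2\times 2$ block between them, or by Lemma~\ref{lem:fact2} when a slit midpoint lies between them. For the vertical lines, Lemma~\ref{lem:fact1} propagates the type across every slit on the line and Lemma~\ref{lem:rectangularSection} propagates it through the hole-free sub-rectangles between consecutive slits; combined with the fact that the extreme unit edges of each line are never slit, this gives a single type for the whole line.

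\textbf{Case analysis.} If every $u_k=180^\circ$, then each of the three columns is folded flat onto a single face of the cube, so $P$ covers at most $3$ faces, a contradiction. If $t_1=t_2=180^\circ$, then folding all of $x=1$ and all of $x=2$ flat identifies the three columns, so the image of $F$ equals the image of a single straight $1\times n$ strip and covers at most $4$ faces, again a contradiction. In the remaining case some horizontal crease is folded $90^\circ$ and, say, $t_1=90^\circ$. Here I would exploit that every $4$-valent vertex of $P$ maps to a vertex of the cube, which lies on only three faces, so the four faces surrounding such a vertex take at most three distinct values. Fixing a height $k$ with $u_k=90^\circ$, Lemma~\ref{lem:fact3} guarantees that no slit on $x=1$ is centred at height $k$, so both flanking unit edges $\{1\}\times[k-1,k]$ and $\{1\}\times[k,k+1]$ are genuine creases (of type $90^\circ$) and $(1,k)$ is a $4$-valent vertex. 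Propagating this vertex constraint along the creases should force the images of columns $1$ and $2$, and then, through $t_2$, of column $3$, to lie in a single equatorial ring of the cube, so $F$ covers at most $4$ faces. If this last collapse proves stubborn in some configuration, one can fall back on Theorem~\ref{thm:slitsnonfolding}: when the slits are pairwise even-separated it applies directly (after appending a small strip carrying a horizontal slit, as in that proof), and the only remaining configurations are those with a pair of odd-separated vertical slits, which in a width-$3$ polyomino are forced to lie in the same or in adjacent vertical lines.

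\textbf{Main obstacle.} The crux is the last case: converting the local ``at most three faces meet at a vertex'' constraint into the global statement that all three columns stay inside one ring. This amounts to showing that a straight $1\times n$ strip ``rolls'' around a fixed equatorial ring of the cube, and that two such strips joined along a $90^\circ$-folded vertical line must roll around the \emph{same} ring once some $90^\circ$ horizontal crease is present; the places where the slits temporarily detach adjacent columns are exactly what makes tracking these rolls delicate, and I expect this to absorb most of the work.
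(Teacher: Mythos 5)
Your setup and the two easy cases are fine, but the proof has a genuine gap exactly where you flag it: the case in which a vertical crease and a horizontal crease are both folded $90^\circ$ is never actually carried out, and the fallback you offer does not close it. Theorem~\ref{thm:slitsnonfolding} only disposes of the configurations in which the slits are pairwise even-separated; the odd-separated configurations (which you correctly note must sit in the same or adjacent vertical lines) are left entirely unhandled. Worse, the target you are aiming at in that case is false: the image need not lie in a single ring of four faces. For instance, with $t_1=90^\circ$ and $t_2=180^\circ$ the left column folds onto a single face $f$ adjacent to the ring covered by the central column, so five faces can be covered; the correct conclusion is only that the face \emph{opposite} $f$ is missed. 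So the ``everything rolls around the same equatorial ring'' programme cannot succeed as stated.

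The paper avoids your hard case entirely by casing only on the two vertical creases (each is uniformly $90^\circ$ or uniformly $180^\circ$ by Lemma~\ref{lem:fact1} together with Lemma~\ref{lem:rectangularSection}) and never analysing the horizontal creases at all. If some vertical crease is $180^\circ$, the central column --- a $1\times n$ strip --- wraps a ring of four faces whose left boundary vertices land on the corners of one uncovered face $f$; the $180^\circ$ fold identifies the outer boundary of the folded-back column with those same corners, and every square of the remaining column shares at least one vertex with the crease mapping to the corners of $f$ (a slit is only two units tall, so no square is fully detached from them). Hence every square of $P$ has a vertex on $f$, and the face $f'$ opposite $f$, which shares no vertex with $f$, is never covered. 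If both vertical creases are $90^\circ$, the outer columns wrap the same ring, both creases map onto the boundary of the same uncovered face $f$ of that ring, and the identical vertex argument shows $f'$ is missed. This ``every square touches a corner of $f$'' device is the missing idea: it replaces the global ring-tracking that you anticipated would absorb most of the work.
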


\begin{proof}
    By Lemma \ref{lem:fact1} above, all edges on each of the two vertical crease are folded the same (either $90 ^\circ$ or $180^\circ$) in a hypothetical folding onto the cube.
    Assume that such a folding exists.

    First assume that one of the vertical creases, without loss of generality the one between the central and right columns, is folded $180^\circ$. Then every vertex on the right boundary which is not at the same height as the midpoint of a slit on this crease maps to the same vertex of the cube as the vertex two units left of it.

    The central column of squares (viewed in isolation) is a $1\times n$ strip, and so it can map to at most $4$ faces of the cube, in a ring shape. Clearly, the presence of other squares will not change this fact.  Notice that this ring leaves at least two opposite faces $f$ and $f'$ of the cube uncovered and that every vertex on the left of this column maps to the same side, without loss of generality $f$, of this ring of $4$ faces. The above observation ensures that the image of every square in the right column has one vertex incident to $f$, moreover, every face in the left column shares at least one vertex with a face in the central column, and this vertex is mapped to a vertex of $f$. Hence every face of $P$ is mapped to a face which has a vertex with $f$ in common, and thus $f'$ is uncovered.

    Thus we may assume that all vertical edges are folded $90 ^\circ$. 
    But then the left and right column must map to the same ring of $4$ faces, the edges of both vertical creases must map to the same side, $f$, of this ring, 
    and a similar argument as before shows that the side, $f'$, opposite to $f$ remains uncovered.
\end{proof}

Since any $1 \times n$ and $2 \times n$ polyomino is contained in a $3 \times n$ polyomino, the above lemma shows that if a rectangular polyominoes with only slit holes is foldable, then its height and width are both at least $4$. 

\begin{lem}
    \label{lem:4byn-vertical-crease-pattern}
    Let $P$ be a $4 \times n$ polyomino with only vertical slit holes. Then in every folding of $P$ onto a cube either all vertical creases are folded $90^\circ$, or the central vertical crease is folded $90^\circ$ and the other two vertical creases are folded $180^\circ$.
\end{lem}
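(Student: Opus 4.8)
The plan is to determine the fold type of each of the three vertical creases through a case analysis built on the fact that every column of $P$ is a hole-free $1\times n$ strip, together with the slit lemmas established above.

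First I would record that each vertical crease $v_i$ folds uniformly. A slit on $v_i$ cuts it into maximal runs of crease edges; inside one run the edges are collinear and contained in a hole-free $2\times m$ sub-strip, so Lemma~\ref{lem:rectangularSection} forces a common fold type along the run, and Lemma~\ref{lem:fact1} transports that type across each slit. Hence $(v_1,v_2,v_3)$ is a triple of values in $\{90^\circ,180^\circ\}$, and the lemma amounts to excluding, for a hypothetical folding of $P$ onto the cube, the six ``forbidden'' triples: the four with $v_2=180^\circ$, and the two with $v_2=90^\circ$ in which exactly one of $v_1,v_3$ equals $180^\circ$. By the left--right symmetry of the statement the latter reduces to the single case $v_1=180^\circ$, $v_3=90^\circ$.

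The engine, used exactly as in the proof of Lemma~\ref{lem:impossible-three-by-n}, is that each column covers at most four faces, all lying in one ring of the cube, with its left long edge mapping into the belt of one uncovered axis face of that ring and its right long edge into the belt of the opposite axis face. Consecutive columns $i,i+1$ are glued along $v_i$, and away from the slits on $v_i$ this identifies the image of column $i$'s right long edge with that of column $i+1$'s left long edge. Whenever this common image contains two distinct cube edges, the corresponding axis faces agree, and --- since axis faces come in opposite pairs --- columns $i$ and $i+1$ lie in the \emph{same} ring. Chaining this along $v_1,v_2,v_3$ would force all four columns into one ring, leaving two opposite faces of the cube uncovered, a contradiction. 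So in any folding of $P$ onto the cube at least one vertical crease has a ``collapsed'' image (a single cube edge), and it is precisely the analysis of these collapses that separates the allowed triples from the forbidden ones.

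For the degenerate sub-cases I would argue according to which of $v_1,v_2,v_3$ carry slits. When $v_i$ carries a slit and is folded $180^\circ$, Lemma~\ref{lem:fact1} already confines the slit's support to a ring of at most four faces, leaving two opposite faces uncovered; when $v_i$ is folded $90^\circ$, Lemmas~\ref{lem:fact2} and~\ref{lem:fact3} pin down the horizontal folds through the slit. In each situation this forces one or more columns onto a very small set of faces, and a covering count then shows that if $v_2=180^\circ$, or if $v_2=90^\circ$ with exactly one of $v_1,v_3$ equal to $180^\circ$, some face of the cube is necessarily missed, whereas the configurations $(90^\circ,90^\circ,90^\circ)$ and $(180^\circ,90^\circ,180^\circ)$ survive --- as they must, given the foldable examples in Figure~\ref{fig:MinimalSlitPolys}. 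The bookkeeping in this collapse analysis --- tracking which belt each collapsed crease sits in and how the surviving $90^\circ$ folds route the columns around the cube --- is the step I expect to be the main obstacle.
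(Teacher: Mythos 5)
Your setup is sound and points in the same direction as the paper's argument: each vertical crease does fold uniformly (Lemma \ref{lem:rectangularSection} within the hole-free runs, Lemma \ref{lem:fact1} across the slits), each column is a $1\times n$ strip whose image lies in a single ring of four faces with its two long edges on the two boundary circles of that ring, and the lemma reduces to excluding the four triples with $v_2=180^\circ$ and the two with $v_2=90^\circ$ and exactly one outer crease at $180^\circ$. But the proof stops exactly where the lemma's content begins. The exclusion of the forbidden triples is asserted as ``a covering count then shows\dots'' and you yourself flag the bookkeeping as ``the step I expect to be the main obstacle''; that step is the whole lemma, and nothing in the proposal carries it out. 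Moreover, the dichotomy your ``engine'' produces --- some crease has a collapsed (single-edge) image --- is not the dichotomy the statement is about ($90^\circ$ versus $180^\circ$ at each crease), and you never bridge the two: a $180^\circ$ crease is never collapsed, a $90^\circ$ crease may or may not be, so knowing ``at least one crease collapses'' does not by itself rule out any of the six forbidden patterns.

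What is actually needed (and what the paper does in two sentences) is an incidence argument, not a count of collapsed creases. If $v_2=180^\circ$, the two central columns coincide face-by-face in one ring $R$, and the $180^\circ$ fold sends the outer long edge of column $3$ back onto the same boundary circle $\partial f$ ($f\notin R$) that carries the outer long edge of column $2$; hence every square of columns $1$ and $4$ has a vertex on $\partial f$, no square of $P$ can map to the face opposite $f$, and the folding is not onto --- regardless of how $v_1,v_3$ are folded. If instead exactly one crease, say $v_1$, is folded $180^\circ$, then columns $1$, $2$ and $4$ all lie in one ring $R$ and every square of column $3$ has a vertex on one boundary circle of $R$, so again the opposite face is missed. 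Without these (or equivalent) arguments identifying the specific uncovered face in each forbidden configuration, the proposal is a plan rather than a proof.
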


\begin{proof}
 We follow a similar approach as in the previous proof. If the central crease is folded $180^\circ$, then the central two columns map to the same ring of $4$ faces of the cube and every square of the other two columns contains at least one vertex whose image is incident to one side $f$ of this ring, so the opposite side remains uncovered. Similarly, if exactly one of the vertical creases, say the one between the first two columns, is folded $180^\circ$, then the first two columns and the fourth column all only cover faces in the same ring of $4$ faces. Each square in the third column contains at least one vertex whose image is incident to one side $f$ of this ring.
\end{proof}

\begin{lem}
    \label{lem:4byn-not-all-90}
    Let $P$ be a rectangular $4 \times n$ polyomino whose only holes are vertical slits. Assume that no two of these holes cooperate. Then $P$ has no folding onto the cube in which all vertical creases are folded $90^\circ$.
\end{lem}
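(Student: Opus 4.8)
The plan is to suppose, for contradiction, that $P$ admits a folding onto $\mathcal{C}$ in which all three vertical creases are folded $90^\circ$, and to derive from this that either two slits of $P$ cooperate or that a face of $\mathcal{C}$ is left uncovered.

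First I would pin down the behaviour at each slit. Since every vertical crease is folded $90^\circ$, each slit is folded $90^\circ$ vertically in the sense of Lemma~\ref{lem:fact1}, so by Lemma~\ref{lem:fact3} its central horizontal crease is not folded $90^\circ$, and hence by Lemma~\ref{lem:fact2} the two horizontal creases flanking the centre of every slit are folded $180^\circ$; in particular every slit folds to make a flap.

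The heart of the argument is to show that \emph{every} horizontal crease of $P$ is folded $180^\circ$. Fix two horizontally adjacent columns $c,c+1$ and delete from these two columns the (at most two) rows of squares met by each slit of the crease separating them; what remains is a disjoint union of hole-free rectangular sub-polyominoes. Lemma~\ref{lem:rectangularSection} applies to each such sub-rectangle, and since its only vertical crease is folded $90^\circ$, all of its horizontal creases must be folded $180^\circ$. Carrying this out for the three pairs of adjacent columns, and using the flap structure from the previous paragraph for the horizontal creases through slit centres, the only horizontal creases not yet forced to $180^\circ$ are those lying immediately above or below a slit; and such a crease $L$ is still forced to $180^\circ$ unless, at the height of $L$, \emph{both} neighbouring pairs of columns are blocked by a slit in the adjacent crease, that is, unless there are two slits in adjacent creases whose rows overlap. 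I expect this ``blocked'' configuration to be the main obstacle. I would handle it by hand, exploiting the fact that $P$ has width only $4$: either the rectangular support of the two blocking slits already folds into the cube under the given folding, in which case folding everything outside that support trivially with $180^\circ$ folds exhibits these two slits as cooperating and contradicts the hypothesis, or the clustered slits are too confined to cover enough faces and a face of $\mathcal{C}$ is necessarily left uncovered.

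It remains to rule out the case that all horizontal creases of $P$ are folded $180^\circ$. In that case every slit's central crease is folded $180^\circ$ as well, so each flap folds back onto its own column, and each of the four columns, being a $1\times n$ strip all of whose creases are folded $180^\circ$, maps to a single face of $\mathcal{C}$. Consecutive columns are joined by a $90^\circ$ fold and hence map to adjacent faces, so the images of the four columns form a walk of length three in the face-adjacency graph of $\mathcal{C}$ and cover at most four of its six faces. Thus at least two faces remain uncovered, contradicting that the folding is onto $\mathcal{C}$, and the proof is complete.
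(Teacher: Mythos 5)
Your opening reduction (by Lemmas \ref{lem:fact1}, \ref{lem:fact2} and \ref{lem:fact3}, the central horizontal crease of every slit is $180^\circ$) and your closing argument (if \emph{all} horizontal creases are $180^\circ$, the four columns cover at most four faces) are both sound. The gap is the middle step, and it sits exactly where the difficulty of the lemma lives. First, your inventory of the unforced creases is off: for a slit in an \emph{outer} vertical crease, the horizontal crease segment in the boundary column (column $1$ or $4$) immediately above or below that slit has no second neighbouring pair of columns at all, so a \emph{single} such slit already defeats the hole-free-rectangle argument there; your restatement ``unless there are two slits in adjacent creases whose rows overlap'' is therefore not the right description of the problematic cases. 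Second, and more seriously, the overlapping-slits configuration you propose to ``handle by hand'' is precisely the configuration realised by the minimally cooperating $4$-sets of Figure \ref{fig:4slitsminimal}, so it genuinely occurs under the hypothesis that no two slits cooperate, and your dichotomy (``either the support of the two blocking slits folds into the cube, so they cooperate, or a face is left uncovered'') is asserted rather than proved. It is not even clearly exhaustive: in the given folding the rest of $P$ is not folded trivially, so a support covering five faces neither makes the pair cooperate nor immediately leaves a face of the cube uncovered by the whole folding. Since this is the only case with real content, the proof as proposed does not go through.

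The paper sidesteps the issue by proving a strictly weaker intermediate claim that still suffices: only the horizontal creases at slit-midpoint heights (and at even distance from them) need to be $180^\circ$, and these \emph{do} propagate across the full width because Lemma \ref{lem:fact2} carries the fold type through slit centres, exactly where Lemma \ref{lem:rectangularSection} fails. One then splits $P$ into its left two and its right two columns---turning the central-crease slits into boundary cuts---and runs the parity induction from the proof of Theorem \ref{thm:slitsnonfolding} on each $2\times n$ half: every square touches an odd-aligned vertex forced onto one fixed side of the ring of four faces covered by the bottom row, so the opposite face is never covered. If you want to rescue your argument, replace the goal ``every horizontal crease is $180^\circ$'' by this parity statement; trying to control the creases adjacent to slit endpoints head-on is exactly what the adjacent-crease slit configurations are designed to prevent.
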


\begin{proof}
Note that all slits in then non-central vertical creases must be even-separated, as otherwise two of them would cooperate. Since all vertical creases are folded $90^\circ$, the bottom row of the polyomino covers a ring of $4$ faces on the cube. 
Considering the left and right two columns of the polyomino as two distinct subpolyominoes, we can apply the proof of Theorem \ref{thm:slitsnonfolding} separately to the two $2 \times n$ polyominoes to see that every square of $P$ contains at least one vertex whose image is incident to the same side $f$ of this ring.
\end{proof}

\begin{lem}
    \label{lem:impossible-four-by-n}
    Let $P$ be a rectangular $4 \times n$ polyomino whose only holes are vertical slits. Assume that no two of these holes cooperate. Further assume that one of the off-central vertical creases contains no slit. Then $P$ has no folding onto the cube.
\end{lem}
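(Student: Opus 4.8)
The plan is to assume that $P$ folds onto the cube and to show that, under the stated hypotheses, two opposite faces of the cube are necessarily left uncovered. Label the four columns $1,2,3,4$ from left to right and the three vertical creases $c_1,c_2,c_3$ accordingly (so $c_2$ is the central crease). First I would invoke Lemma~\ref{lem:4byn-vertical-crease-pattern}: a folding of $P$ either has all vertical creases folded $90^\circ$, or has $c_1,c_3$ folded $180^\circ$ and $c_2$ folded $90^\circ$. Since no two holes of $P$ cooperate, Lemma~\ref{lem:4byn-not-all-90} rules out the first possibility, so I am in the case $c_1,c_3$ folded $180^\circ$, $c_2$ folded $90^\circ$. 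Using the extra hypothesis, I may assume without loss of generality that $c_1$ contains no slit, so every slit of $P$ lies on $c_2$ or on $c_3$.

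The key tool is a ``strip lemma'' of the same flavour as the one used inside the proof of Theorem~\ref{thm:slitsnonfolding}: each single column of $P$, viewed on its own, is a \emph{hole-free} $1\times n$ strip (its internal creases are all horizontal; the slits sit on $c_2$ and $c_3$, which are only boundary edges of the columns). Such a strip folds into a subset of a ring of at most four faces of the cube -- that is, into $\{F,F',\widehat F,\widehat F'\}$ where $F,F'$ are adjacent and $\widehat{\cdot}$ denotes the opposite face -- and the remaining pair of opposite faces, which I call the \emph{axis} of the column, contains all the images of the two long boundary edges: all vertices on the left bounding edge of the column map into one axis face, and all vertices on the right bounding edge map into the other. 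This holds because a $90^\circ$ or a $180^\circ$ fold along an (internal, hence horizontal) crease of the strip keeps the image inside the current ring and moves neither long boundary edge off the boundary of its axis face.

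Next I would propagate the axis across the shared creases. Let $\{f,f'\}$ be the axis of column $2$, with all vertices on $c_2$ mapping into $f'$. Since $c_2$ is the left boundary edge of column $3$, the strip lemma applied to column $3$ forces its axis to be $\{f,f'\}$ as well, with all vertices on $c_3$ mapping into $f$; applying it once more to column $4$ across $c_3$ gives that column $4$ also has axis $\{f,f'\}$. Finally, because $c_1$ contains no slit and is folded $180^\circ$, columns $1$ and $2$ are glued along $c_1$ and therefore cover exactly the same set of faces, so column $1$ also lies in the ring around $\{f,f'\}$. Hence the image of all of $P$ is contained in that ring of four faces, so $f$ and $f'$ are uncovered, contradicting the assumption that the folding is onto the cube.

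The step I expect to be the main obstacle is making the axis propagation rigorous in the degenerate situations, and this is precisely where the remaining hypotheses do their work. If some column folds entirely by $180^\circ$ it collapses to a single face and its ``axis'' is not determined by the strip lemma alone; one then has to argue directly (a short face count, using that the other columns still cover at most four faces each and are linked through the glued or $90^\circ$-folded creases) that an opposite pair still survives -- and the hypothesis that $c_1$ is slit-free is exactly what prevents column $1$ from escaping onto a ``perpendicular'' ring that would reach $f$ or $f'$. Similarly, if $c_3$ does contain a slit, columns $3$ and $4$ are not glued along it, but since $c_3$ is folded $180^\circ$ the slit is folded $180^\circ$ vertically, so by Lemma~\ref{lem:fact1} its (at most $2\times 2$) support covers at most four faces forming a ring; being attached to the rest of columns $3$ and $4$, that ring must be the ring of column $3$, and so column $4$ is still confined to it. Organising this finite case analysis -- essentially a check on how the three axis-rings of the cube can overlap -- so that in every configuration one can name a concrete pair of opposite faces that no square of $P$ reaches is the real content of the proof.
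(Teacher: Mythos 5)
Your reduction via Lemmas~\ref{lem:4byn-vertical-crease-pattern} and~\ref{lem:4byn-not-all-90} to the case where $c_1,c_3$ are folded $180^\circ$ and $c_2$ is folded $90^\circ$ is fine, and your ``strip lemma'' for a single column is essentially the argument already used in Lemma~\ref{lem:impossible-three-by-n}. The gap is the axis-propagation step: the axis of a column does \emph{not} propagate across a $90^\circ$-folded crease, and $c_2$ is folded $90^\circ$ in the only surviving case. Concretely, a non-slit unit edge of $c_2$ maps to an edge of the cube incident to exactly two faces, namely the images of the two adjacent squares; since the right edges of the column-$2$ squares all lie on the boundary of the axis face $f'$, and $f'$ cannot be the image of a column-$2$ square (it is off column $2$'s ring), $f'$ must equal the image of the adjacent column-$3$ square. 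So column $3$ \emph{covers} $f'$, its ring is perpendicular to (not equal to) the ring of column $2$, and your conclusion that all of $P$ lies in one ring with both $f$ and $f'$ uncovered is false. This is not a repairable technicality: if axes propagated across $90^\circ$ creases, no $4\times n$ slit polyomino with the central crease folded $90^\circ$ could ever fold onto the cube, contradicting the foldable examples in Figure~\ref{fig:4slitsminimal} on which the theorem following this lemma depends. (Your closing remarks locate the role of the slit-free hypothesis at $c_1$, but the ``escape onto a perpendicular ring'' you need to forbid happens at $c_2$, and it happens regardless of slits.)

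The repair is the paper's argument, which is much shorter: split on how the slit-free off-central crease is folded. If it is folded $90^\circ$, Lemma~\ref{lem:4byn-vertical-crease-pattern} forces all vertical creases to be $90^\circ$, which Lemma~\ref{lem:4byn-not-all-90} excludes. If it is folded $180^\circ$, then --- precisely because it contains no slit --- column $1$ folds exactly onto column $2$, so the image of $P$ equals the image of the $3\times n$ subpolyomino on columns $2$--$4$, and Lemma~\ref{lem:impossible-three-by-n} shows that this image misses a face. You never invoke Lemma~\ref{lem:impossible-three-by-n}, and the $180^\circ$ case genuinely needs it (or an equivalent ``one uncovered face'' argument); the ring-containment claim you substitute for it proves too much and is false.
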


\begin{proof}
    By Lemma \ref{lem:impossible-three-by-n}, there is no folding onto the cube where the crease with no hole is folded $180^\circ$. By Lemmas \ref{lem:4byn-vertical-crease-pattern} and \ref{lem:4byn-not-all-90}, there is no such folding where the crease is folded $90^\circ$.
\end{proof}

\begin{thm}
 If $P$ is a foldable $4 \times n$ polyomino with only vertical slit holes, then $P$ contains a cooperating set of at most $4$ holes.
\end{thm}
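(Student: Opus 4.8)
The plan is to split on whether two holes of $P$ already cooperate: if so, that pair is a cooperating set of size $2\le 4$ and we are done, so assume from now on that \emph{no two holes of $P$ cooperate}, and let us produce a cooperating set of size at most $4$. Under this assumption the earlier lemmas control every folding of $P$ onto the cube. Write $L$ and $R$ for the vertical creases between columns $1,2$ and between columns $3,4$. By Lemma~\ref{lem:impossible-four-by-n} each of $L,R$ contains a slit, else $P$ does not fold; by Lemmas~\ref{lem:4byn-vertical-crease-pattern} and~\ref{lem:4byn-not-all-90}, every folding of $P$ onto the cube folds $L$ and $R$ by $180^\circ$ and the central vertical crease by $90^\circ$.

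Fix a folding $F$ of $P$ onto the cube. Since $L$ is folded $180^\circ$, columns $1$ and $2$ cover a common set $\mathcal R_1$ of cube faces; as a column of $P$ is a $1\times n$ strip it covers at most four faces lying on a single ring of the cube, so $\mathcal R_1$ lies in a ring, and likewise columns $3,4$ cover a set $\mathcal R_2$ inside a ring. The $90^\circ$ central fold forces the edge it maps to lie on both rings, and since $F$ is onto we must have $\mathcal R_1\cup\mathcal R_2$ equal to all six faces; hence the two rings are distinct and (a strip cannot cover a pair of opposite faces without also covering a third) $|\mathcal R_1|,|\mathcal R_2|\ge 3$.

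Next comes a counting argument. A column strip reaches three or more distinct faces only if at least two of its horizontal creases are folded $90^\circ$; and a horizontal crease of a left-half column can be folded $90^\circ$ only if some slit on $L$ or on the central crease has its midpoint on that crease, because otherwise the block of paper straddling that crease on the left is a hole-free rectangle containing the central crease, which is folded $90^\circ$, so by Lemma~\ref{lem:rectangularSection} all of its horizontal creases, in particular the one in question, are folded $180^\circ$. Thus two slits on $L$ or the central crease carry the two turns needed on the left, and symmetrically two slits on $R$ or the central crease carry the two turns on the right; these (at most) four slits form my candidate cooperating set $S$.

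Finally I would verify that $S$ cooperates, i.e., that filling in every slit outside $S$ and re-folding it trivially still gives a folding onto the cube. In the resulting polyomino $P'$, every horizontal crease that a discarded slit had freed can be reset to a $180^\circ$ fold (propagated across the width by Lemma~\ref{lem:fact2}), while the retained turns keep $\mathcal R_1$ and $\mathcal R_2$ distinct rings whose union is all six faces, and a consistent layer mapping can be built greedily once only the $\le 4$ non-trivial slits interact, as in the proofs of Lemmas~\ref{lem:fact1}--\ref{lem:fact3} and using the room a $4\times n$ strip affords over the $3\times n$ case excluded by Lemma~\ref{lem:impossible-three-by-n}. After compressing $P$ with $180^\circ$ folds so the retained slits sit in a bounded window (as in the proof of Corollary~\ref{cor:bigslitsonly}), this last step reduces to a finite check on small polyominoes, in the style of Figures~\ref{fig:MinimalSlitPolys} and~\ref{fig:slitsHorzontal}. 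I expect this verification to be the main obstacle: covering three faces per side does not by itself cover all six, so one must choose the retained slits so that $\mathcal R_1$ and $\mathcal R_2$ stay \emph{distinct} rings --- which requires analysing how a strip may wind around a ring --- and rule out forced self-intersections among the at most four interacting slits; bookkeeping the preliminary $180^\circ$-fold reductions and the symmetry between $L$ and $R$ is a secondary chore.
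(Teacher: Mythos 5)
Your opening moves coincide with the paper's: assume no pair cooperates, invoke Lemmas~\ref{lem:impossible-four-by-n}, \ref{lem:4byn-vertical-crease-pattern} and~\ref{lem:4byn-not-all-90} to force the central vertical crease to fold $90^\circ$ and the two outer vertical creases to fold $180^\circ$, and then argue that each half of the polyomino must ``turn'' via slits. But the proof has a genuine gap exactly where you flag it: you select at most four slits that carry the turns in the \emph{given} folding $F$ and then say you ``would verify that $S$ cooperates,'' conceding this is the main obstacle. That verification is not a routine chore --- it is the substance of the theorem. Knowing that four particular slits are the ones whose midpoints host the $90^\circ$ horizontal creases of $F$ does not show that a \emph{new} folding exists in which every other hole folds trivially; the other holes might be flap-folded in $F$ in a way that is essential to avoiding self-intersection, and your set $S$ is not constrained tightly enough to reduce to a finite, checkable list. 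The paper closes this by a sharper structural step you are missing: the outer-crease slit that ring-folds has a $90^\circ$ horizontal crease through its centre, which, crossing the $90^\circ$ central vertical crease, forces a slit in the central crease that is $1$-separated from it (on each side). This pins the candidate set down to one of the explicit configurations of Figure~\ref{fig:4slitsminimal}, each of which is exhibited with a concrete folding; cooperation then follows by folding everything else away with $180^\circ$ folds.

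A secondary weak point: your claim that a horizontal crease in the left half can fold $90^\circ$ only at the \emph{midpoint} of some slit on $L$ or the central crease is justified by applying Lemma~\ref{lem:rectangularSection} to the block straddling that crease, but that block need not be hole-free --- a slit whose midpoint sits one row above or below still intrudes into the block as a length-one cut, so the lemma does not apply directly. The paper avoids this by arguing through Lemma~\ref{lem:fact3} and the ring-fold/flap-fold dichotomy rather than through crease-by-crease propagation. Neither issue is fatal to the overall strategy, but as written the argument establishes only that some set of at most four slits is ``responsible'' for the folding, not that such a set cooperates.
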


\begin{proof}
    We may without loss of generality assume that $P$ does not contain a cooperating pair of holes. Thus by Lemmas \ref{lem:4byn-vertical-crease-pattern} and \ref{lem:4byn-not-all-90}, in any folding onto the cube, the central vertical crease is folded $90^\circ$, and the other vertical creases are folded $180 ^\circ$. The $90^\circ$ fold in the central vertical crease implies that every second horizontal crease in the central two columns is folded $180^\circ$, thus each of these columns can only cover at most two faces of the cube.

    The $180^\circ$ fold between the left two columns means that these two columns map to a ring of $4$ faces, and thus the right two columns must cover the remaining two opposite faces. Since one of them only covers two different faces, this is only possible if at least one of the holes between these columns ring-folds. Analogously, we can show that some slit between the left two columns must ring-fold.

    If a slit between the left two columns ring-folds, then the crease running horizontally through the centre of this slit folds $90 ^\circ$. Since the central vertical crease also folds $90^{\circ}$, this implies that the top or bottom of a slit in the central vertical crease must be at the same height as the centre of the ring-folding slit.

    Hence there must be at least one slit in the crease between the left two columns which is $1$-separated from a slit in the central crease, and at least one slit in the crease between the right two columns which is $1$-separated from a slit in the central crease. The polyomino with only these holes can be reduced to one of the cases in Figure \ref{fig:4slitsminimal}, thus finishing the proof.
\end{proof}

        \input{fourslits}

\begin{lem}
    \label{lem:impossible-five-by-n}
Let $P$ be a rectangular $5 \times n$ polyomino whose only holes are vertical slits in the two central creases. Assume that all pairs of slits in the same crease are even separated. Then $P$ does not fold.
\end{lem}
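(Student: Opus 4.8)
The plan is to suppose $P$ has a folding onto the cube and derive a contradiction by tracking which faces each column can cover. Label the four vertical creases $c_1,c_2,c_3,c_4$ from left to right, so the slits lie in $c_2$ and $c_3$. A short induction on the number of slits lets us assume all slits fold non-trivially (fill in a trivially folded one; this preserves ``slits in the two central creases'' and even-separation, and one applies the inductive hypothesis). Even-separation within $c_2$ forces all midpoints of $c_2$-slits to share a parity $q_2$, and likewise $q_3$ for $c_3$. If $q_2=q_3$, or one of $c_2,c_3$ carries no slit, then all slits of $P$ are pairwise even-separated and Theorem \ref{thm:slitsnonfolding} finishes; so assume $q_2\neq q_3$ and both creases carry slits. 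Two facts I would record up front: (i) each of $c_1,\dots,c_4$ is folded uniformly — for $c_1,c_4$ by Lemma \ref{lem:rectangularSection} on the outer hole-free $2\times n$ blocks, for $c_2,c_3$ by Lemma \ref{lem:rectangularSection} between consecutive slits together with the propagation across a slit from Lemma \ref{lem:fact1}; (ii) the central column is a hole-free $1\times n$ strip, so its squares lie in one ring $\mathcal R$ of four faces, with all of its left-edge vertices on the boundary of one pole $U$ of $\mathcal R$ and all of its right-edge vertices on the boundary of the other pole $D$.

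Case A: some outer crease, say $c_1$, is folded $180^\circ$. Then column $1$ covers row-by-row exactly the faces that column $2$ covers, so columns $2$–$5$ form a $4\times n$ polyomino that still covers every face, i.e. folds \emph{onto} the cube. Its slits lie in the off-central crease $c_2$ (whose slits are even-separated) and in its central crease $c_3$, while its other off-central crease $c_4$ is slit-free, so Lemma \ref{lem:impossible-four-by-n} applies and gives a contradiction (the non-cooperation hypothesis of that lemma is used there only to deduce even-separation of the off-central slits, which we have directly). The case of $c_4$ is symmetric.

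Case B: $c_1$ and $c_4$ are both folded $90^\circ$. Then by Lemma \ref{lem:rectangularSection} the outer $2\times n$ blocks have all their horizontal creases folded $180^\circ$, so each of columns $1,2,4,5$ is folded onto a single face by an all-$180^\circ$ stack. A short computation shows that in such a stack all the side-edges on one side are carried to a single edge of the face; matching column $2$'s right edges (on $c_2$) with column $3$'s left edges (on $\partial U$) then forces the face of the central column to be constant — equal to some $\bar X\in\mathcal R$ — on every row where $c_2$ is a genuine crease, and symmetrically on every row where $c_3$ is a genuine crease; these agree since $c_2$ and $c_3$ are genuine in the extreme rows. If the central column were non-constant it would ``turn'' at some height $j^\ast$, where its crease is $90^\circ$. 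Lemma \ref{lem:fact2} plus ``columns $2,4$ are all-$180^\circ$'' rules out a slit of $c_2$ or $c_3$ centred at $j^\ast$; and since the central column is constant $\bar X$ wherever $c_2$ is genuine, the two $c_2$-edges flanking $j^\ast$ cannot both be genuine, so one lies in a $c_2$-slit, which must then be centred at $j^\ast\pm1$. Hence $q_2\equiv j^\ast+1\pmod 2$, and symmetrically $q_3\equiv j^\ast+1$, contradicting $q_2\neq q_3$. So the central column is constant too, $P$ is a union of five single-face columns, it covers at most five faces, and again we have a contradiction.

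The difficult case is Case B, resting on two computations: the edge-tracking fact that an all-$180^\circ$ single-face column sends every side-edge on one side to one fixed edge of that face — which, threaded through the strip structure of the central column, pins that column to a constant face away from rows that are slit-rows of both $c_2$ and $c_3$ — and the use of Lemma \ref{lem:fact2} to translate a genuine $90^\circ$ turn of the central column into the existence of a slit within distance $1$ in \emph{each} of $c_2,c_3$, which is exactly where $q_2\neq q_3$ produces the contradiction. A minor point is checking in Case A that Lemma \ref{lem:impossible-four-by-n} may be invoked, which once more comes down to even-separation of the slits in the off-central crease.
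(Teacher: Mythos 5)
Your argument is correct, and its first half is the paper's argument: the paper's entire proof is (i) by Lemma \ref{lem:impossible-four-by-n}, both outer vertical creases must be folded $90^\circ$ --- exactly your Case A, down to the observation that a $180^\circ$ outer crease collapses $P$ onto a $4\times n$ sub-polyomino that still folds onto the cube (your extra care in checking that the non-cooperation hypothesis of that lemma is only used to obtain even-separation of the off-central slits is a point the paper glosses over) --- followed by (ii) the one-line claim that all horizontal creases must then be folded $180^\circ$, so the folding is not onto. Where you genuinely diverge is step (ii). The paper implicitly propagates the $180^\circ$ horizontal folds from the outer columns into the central column; this is immediate from Lemma \ref{lem:rectangularSection} and Lemma \ref{lem:fact2} except at a height flanked on \emph{both} sides by slit-halves of $c_2$ and of $c_3$ not centred there, which is precisely the configuration you isolate as a ``turn''. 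You dispose of it by first splitting off $q_2=q_3$ (Theorem \ref{thm:slitsnonfolding}) and then showing a turn at height $j^\ast$ forces slits of $c_2$ and of $c_3$ centred at $j^\ast\pm1$, whence $q_2\equiv q_3\equiv j^\ast+1$, contradicting $q_2\neq q_3$. Your version is longer but actually closes the sub-case the paper's one-liner skates over, at the cost of the edge-tracking machinery (which itself just re-derives, for a single column, the ``same side of the ring'' bookkeeping from Lemma \ref{lem:impossible-three-by-n} and Theorem \ref{thm:slitsnonfolding}). Two small repairs: the assertion that $\bar X_2$ and $\bar X_3$ ``agree since $c_2$ and $c_3$ are genuine in the extreme rows'' is false in general --- e.g.\ in a $5\times 5$ polyomino with $c_2$-slits at rows $1$--$2$ and $3$--$4$ and a $c_3$-slit at rows $4$--$5$ there is no row in which both creases are genuine --- but this is harmless, since your contradiction at a turn uses only local constancy on $c_2$-genuine rows and on $c_3$-genuine rows separately, never that the two constants coincide; and the opening induction reducing to the case where all slits fold non-trivially is never used afterwards and can simply be deleted.
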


\begin{proof}
By Lemma \ref{lem:impossible-four-by-n}, we may assume that the creases between the first two and between the last two columns are folded $90 \deg$. But then all horizontal creases must be folded $180 \deg$, so this folding cannot be onto.
\end{proof}

\begin{thm}
    \label{thm:5byn}
    A $5 \times n$ polyomino only containing vertical slits is foldable if and only if it contains a pair of cooperating slits.
\end{thm}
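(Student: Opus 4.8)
The ``if'' direction is immediate from the definition of cooperation: if $P$ contains a cooperating pair of slits, then $P$ admits a folding onto the cube (one in which the remaining slits fold trivially), so $P$ is foldable. For the converse I would argue by contradiction. Suppose $P$ is a foldable $5\times n$ polyomino whose only holes are vertical slits and that no two of its slits cooperate; fix a folding of $P$ onto the cube.

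The first step is to normalise the slit positions. If two slits are odd-separated, then after collapsing spare rows and columns with $180^\circ$ folds the sub-polyomino carrying only those two slits reduces to one of the minimal foldable configurations of Figure~\ref{fig:MinimalSlitPolys}; the columns needed for this reduction are always available because in a $5$-column strip every vertical crease has at least $3$ columns on one side of it (two same-crease slits are moreover automatically at least $3$-separated, since closer ones overlap). Hence every odd-separated pair would cooperate, a contradiction, so all pairs of slits are even-separated. In particular Lemma~\ref{lem:impossible-five-by-n} applies and tells us that not all slits lie in the two central creases, so, after the left--right reflection, the outer crease $c_1$ between columns $1$ and $2$ contains a slit.

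The core of the proof is a case analysis on the fold type of the four vertical creases $c_1,\dots,c_4$, each of which is folded uniformly by Lemma~\ref{lem:fact1}. Most cases fall to face-counting and collapsing: all four creases folded $180^\circ$ is impossible, since then all five columns lie on one ring of at most four faces; if some \emph{hole-free} vertical crease is folded $90^\circ$ then, applying Lemma~\ref{lem:rectangularSection} to the adjacent hole-free two-column block, every horizontal crease of $P$ is folded $180^\circ$, so each column covers a single face and $P$ misses a face; and if a hole-free outer crease is folded $180^\circ$, collapsing it identifies its two columns and the folding descends to a folding \emph{onto} the cube of a $4\times n$ polyomino with the same slits, where either Lemma~\ref{lem:impossible-four-by-n} (when its surviving off-central crease is empty) gives an outright contradiction or we are thrown into the remaining case. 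That remaining case is when the pattern descends to ``central crease $90^\circ$, the other two $180^\circ$'' on a $4\times n$ polyomino carrying slits in all three of its vertical creases; here I would imitate the proof of the $4\times n$ theorem. The $90^\circ$ central fold forces alternate horizontal creases of the central columns to be $180^\circ$, so those columns cover at most two faces, whence some slit in an outer crease must ring-fold; the horizontal crease through the centre of such a slit is then $90^\circ$ (Lemmas~\ref{lem:fact2} and~\ref{lem:fact3}), which forces an endpoint of a slit in a neighbouring crease to sit at the same height, i.e.\ produces a pair of slits in adjacent creases that is $1$-separated. The crucial point is that in a $5$-wide strip one such pair already reduces, using the spare columns, to a minimal foldable configuration of Figures~\ref{fig:MinimalSlitPolys} or~\ref{fig:slitsHorzontal}, so the pair cooperates, contradicting our assumption---and this is precisely where the additional fifth column lets a pair suffice while the $4\times n$ case needs a set of four slits.

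Finally, the reductions above presuppose a few spare rows and columns. When $n\le 3$ the height of $P$ is less than $4$, so $P$ is not foldable and the statement holds vacuously; the remaining finitely many small $5\times n$ polyominoes either reduce by a $90^\circ$ rotation to a $4\times n$ polyomino covered by the preceding theorem (the horizontal-slit configurations that arise being handled by Lemma~\ref{lem:3slitHorizontal}) or are checked directly against the minimal configurations. I expect the main obstacle to be the last case of the crease analysis: extracting, from the forced crease pattern together with the (even-separated) positions of the slits, an explicit adjacent-crease $1$-separated pair that reduces to a minimal foldable configuration---this is careful bookkeeping rather than a single decisive step.
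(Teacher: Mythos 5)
Your ``if'' direction is fine, and the broad outline (odd-separated pairs should cooperate, then rule out the residual configuration) is in the right spirit, but the first and load-bearing step is wrong. You claim that \emph{every} odd-separated pair of vertical slits in a $5$-column strip cooperates because ``every vertical crease has at least $3$ columns on one side of it.'' That criterion applies to a single crease, not to a pair spanning two creases: for a pair with one slit in crease $b$ (between columns $2$ and $3$) and one in crease $c$ (between columns $3$ and $4$), the pair leaves only $2$ spare columns on each side, whereas the $1$-separated configurations of Figure~\ref{fig:MinimalSlitPolys} need $3$ empty columns on one side (this is exactly the caveat stated after that figure). Such a pair does \emph{not} cooperate --- indeed Lemma~\ref{lem:impossible-five-by-n} shows a $5\times n$ polyomino whose only holes are one slit in each central crease does not fold at all. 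So your contradiction ``hence all pairs are even-separated'' is not established, and the entire remainder of your argument, which is conditioned on that conclusion, never addresses the genuinely hard case: a foldable $P$ whose only odd-separated pairs sit in creases $b$ and $c$. The paper closes this case with a parity observation you are missing: given an odd-separated $b$--$c$ pair, any slit in crease $a$ or $d$ is automatically odd-separated from exactly one member of that pair, and \emph{that} pair does have $3$ spare columns on one side, hence cooperates; so all slits must lie in $b$ and $c$ with even same-crease separations, and Lemma~\ref{lem:impossible-five-by-n} then yields the contradiction.

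A secondary structural point: even granting your first step, the long crease-by-crease analysis in your second paragraph is redundant. Once you have ``all pairs of slits are even-separated,'' Theorem~\ref{thm:slitsnonfolding} says outright that $P$ is not foldable, which is the desired contradiction; there is no need to descend to $4\times n$ polyominoes or to hunt for ring-folding slits. (Your reading of Lemma~\ref{lem:impossible-five-by-n} as asserting ``not all slits lie in the central creases'' also inverts its role --- it is the tool that finishes the proof in the central-crease case, not a hypothesis-elimination device.) The fix is to replace your first paragraph with the correct dichotomy on where an odd-separated pair can live, treat the $b$--$c$ case via the parity argument above, and delete the case analysis.
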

\begin{proof}
    The backwards implication is trivial. For the forward implication, assume that $P$ is a foldable $5 \times n$ polyomino only containing vertical slits.

    Label the $4$ vertical creases $a$-$d$, as seen in Figure \ref{fig:5bynCaseExample}. Recall that if a pair of odd-separated slits appears anywhere except in creases $b$ or $c$, then they guarantee foldability. Likewise, if two odd-separated slits appear in the same crease, we can reduce the polyomino to one of the foldable cases in Figure \ref{fig:MinimalSlitPolys}. In both of these cases we obtain a pair of cooperating holes.
    
    Thus if two holes are odd-separated, one must be in crease $b$, and the other in $c$. Note that if there are any slits in columns $a$ or $d$, they must be odd-separated with one of these holes, and so the polyomino contains a cooperating pair of holes. 
    
    Hence we may assume that $P$ only contains slits in creases $b$ and $c$, and all slits in crease $b$ are even separated, and all slits in crease $c$ are even separated. But such a polyomino is not foldable by Lemma \ref{lem:impossible-five-by-n}.
\end{proof}

    
\begin{figure}
        \centering
        \begin{tikzpicture}
           \fill[gray!10] (0, 0) rectangle (5, 7.5);
           \draw[gray, very thin] (0, 0) grid (5, 7.5);
           \draw[blue, very thick] (0, 0) -- (5, 0);
           \draw[blue, very thick] (0, 0) -- (0, 7.5);
           \draw[blue, very thick] (5, 0) -- (5, 7.5);
           
           \draw[blue, very thick] (2, 1) -- (2, 3);
           \draw[blue, very thick] (3, 4) -- (3, 6);
           \draw[blue, very thick] (2, 5) -- (2, 7);

           \path (1, -0.2) node {a}
                (2, -0.2) node {b}
                (3, -0.2) node {c}
                (4, -0.2) node {d};
        \end{tikzpicture}
        \caption{Example $5\times n$ Case}
        \label{fig:5bynCaseExample}
    \end{figure}

\section{No Bound for Unit Holes and 2-Tall Slits}
\label{sec:unbounded}
    Given some $k\in\mathbb{N}$, construct a polyomino $P$ with $2k+1$ holes as follows:\
    Start with a $4$ by $4k+2$-wide rectangle, and add unit squares 
    to the top left and right corners. From the bottom right corner of the left hole,
    move down one square and place a $2$ wide slit horizontally from left to right.
    move one cell up from the right side of this hole, and place another hole, also
    moving right. Continue this pattern moving up and down until the other hole is reached and 
    all holes have been placed. See Figure \ref{fig:5holes} as an example. 
    \begin{figure}[hbt]
        \centering
        \begin{tikzpicture}[xscale=0.7,yscale=0.7]
    \begin{scope}[shift={(0,-1)}]
    \initcube\rollup\printcube
    \rollup\printcube
    \rollup\printcube
    \rollup\printcube
    \rollright\printcube
    \flipright\printcube
    \rolldown\printcube
    \rolldown\printcube
    \rollleft\printcube
    \rolldown\printcube
    \flipright\printcube
    \flipright\printcube
    \flipright\printcube
    \rollup\printcube
    \rollleft\printcube
    \rollup\printcube
    \rollup\printcube
    \flipright\printcube
    \rolldown\printcube
    \flipright\printcube
    \rollup\printcube
    \flipright\printcube
    \rolldown\printcube
    \rolldown\printcube
    \rollleft\printcube
    \rolldown\printcube
    \flipright\printcube
    \flipright\printcube
    \flipright\printcube
    \flipright\printcube
    \rollup\printcube
    \flipleft\printcube
    \rollleft\printcube
    \rollup\printcube
    \rollup\printcube
    \flipright\printcube
    \rollright\printcube
    \rolldown\printcube
    \end{scope}
    
    \begin{scope}[shift={(-.5,-.5)}] 
        \draw[gray, very thin] (0, 0) grid (10, 4);
        \draw[blue, very thick] (0, 0) rectangle (10, 4);
        \draw[blue, very thick] (1, 3) rectangle (2, 2);
        \draw[blue, very thick] (8, 3) rectangle (9, 2);
        \draw[blue, very thick] (2, 1) -- (4, 1);
        \draw[blue, very thick] (4, 2) -- (6, 2);
        \draw[blue, very thick] (6, 1) -- (8, 1);

        \node[anchor=east,inner sep=1pt, blue] at (1,0.8) {0};
        \node[anchor=east,inner sep=1pt, blue] at (2,0.8) {1};
        \node[anchor=east,inner sep=1pt, blue] at (3,0.8) {2};
        \node[anchor=east,inner sep=1pt, blue] at (4,0.8) {3};
        \node[anchor=east,inner sep=1pt, blue] at (5,0.8) {4};
        \node[anchor=east,inner sep=1pt, blue] at (6,0.8) {5};
        \node[anchor=east,inner sep=1pt, blue] at (7,0.8) {6};
        \node[anchor=east,inner sep=1pt, blue] at (8,0.8) {7};
        \node[anchor=east,inner sep=1pt, blue] at (9,0.8) {8};
        \node[anchor=east,inner sep=1pt, blue] at (10,0.8) {9};

        \node[anchor=east,inner sep=1pt, blue] at (1,1.2) {0};
        \node[anchor=east,inner sep=1pt, blue] at (2,1.2) {1};
        \node[anchor=east,inner sep=1pt, blue] at (3,1.2) {0};
        \node[anchor=east,inner sep=1pt, blue] at (4,1.2) {1};
        \node[anchor=east,inner sep=1pt, blue] at (5,1.2) {2};
        \node[anchor=east,inner sep=1pt, blue] at (6,1.2) {3};
        \node[anchor=east,inner sep=1pt, blue] at (7,1.2) {2};
        \node[anchor=east,inner sep=1pt, blue] at (8,1.2) {3};
        \node[anchor=east,inner sep=1pt, blue] at (9,1.2) {4};
        \node[anchor=east,inner sep=1pt, blue] at (10,1.2) {9};

        \node[anchor=east,inner sep=1pt, blue] at (1,2.8) {0};
        \node[anchor=east,inner sep=1pt, blue] at (3,2.8) {1};
        \node[anchor=east,inner sep=1pt, blue] at (4,2.8) {2};
        \node[anchor=east,inner sep=1pt, blue] at (5,2.8) {3};
        \node[anchor=east,inner sep=1pt, blue] at (6,2.8) {4};
        \node[anchor=east,inner sep=1pt, blue] at (7,2.8) {5};
        \node[anchor=east,inner sep=1pt, blue] at (8,2.8) {6};
        \node[anchor=east,inner sep=1pt, blue] at (10,2.8) {7};

        \node[anchor=east,inner sep=1pt, blue] at (1,3.8) {0};
        \node[anchor=east,inner sep=1pt, blue] at (2,3.8) {0};
        \node[anchor=east,inner sep=1pt, blue] at (3,3.8) {1};
        \node[anchor=east,inner sep=1pt, blue] at (4,3.8) {2};
        \node[anchor=east,inner sep=1pt, blue] at (5,3.8) {3};
        \node[anchor=east,inner sep=1pt, blue] at (6,3.8) {4};
        \node[anchor=east,inner sep=1pt, blue] at (7,3.8) {5};
        \node[anchor=east,inner sep=1pt, blue] at (8,3.8) {6};
        \node[anchor=east,inner sep=1pt, blue] at (9,3.8) {7};
        \node[anchor=east,inner sep=1pt, blue] at (10,3.8) {1};

    \end{scope}

\end{tikzpicture}
        \caption{An example polyomino with $5$ holes}
        \label{fig:5holes}
    \end{figure}


    \begin{thm}
        The polyomino $P$ described above is foldable, and all holes of $P$ must fold non-trivially in any folding onto the cube.

        In particular, for any $k\in\mathbb{N}$ there exists a polyomino with a minimally cooperating set of $2k+1$ holes.
    \end{thm}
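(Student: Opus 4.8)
The statement has two halves: that $P$ folds onto the cube, and that in every such folding every hole folds non-trivially. The final sentence then follows immediately --- the set of all $2k+1$ holes cooperates because $P$ is foldable, and no proper subset cooperates because a proper subset omits a hole, which the second half forbids from folding trivially --- so I would prove the two halves in turn.

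For foldability I would exhibit the folding generalising Figure~\ref{fig:5holes}. The mechanism is that each of the $2k-1$ two-cell slits is ring-folded, so that the $90^\circ$ fold it introduces through its centre lets the long sheet turn around and travel back across the cube; placing the slits in alternating creases, offset by one column as prescribed, makes these turns mesh, while the two unit-square holes near the corners are folded non-trivially, so that the ring of eight cells around each one covers a set of five faces, the two such sets together exhausting all six faces of the cube. I would record the resulting facemapping explicitly (extending the one indicated in Figure~\ref{fig:5holes}) together with a layer mapping witnessing its validity; since the pattern is periodic in the slits it extends to all $k$, and validity is confirmed by folding a paper model.

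For the non-triviality half I would reduce to a non-foldability claim. If some hole $h$ were folded trivially in a folding of $P$ onto the cube, then --- filling in the missing cell if $h$ is a unit-square hole, or healing the cut if $h$ is a slit --- that folding yields a folding onto the cube of the rectangular polyomino $P_h$ obtained from $P$ by removing $h$, since removing $h$ never shrinks the set of covered faces. So it suffices to show $P_h$ is not foldable onto the cube for every hole $h$. To analyse $P_h$ I would rotate so that all slits are vertical and apply the crease-propagation machinery: Lemma~\ref{lem:rectangularSection} on hole-free sub-rectangles, Lemmas~\ref{lem:fact1}--\ref{lem:fact3} on the individual slits. If $h$ is one of the unit squares, then $P_h$ gains a hole-free $4\times 2$ block at that end, and a face count in the spirit of Lemma~\ref{lem:impossible-four-by-n} --- with no hole left to ``reach around'' to the face that the deleted unit square used to cover --- shows that a pair of opposite faces of the cube must remain uncovered. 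If $h$ is an interior slit, then deleting it opens a hole-free block in the middle of the strip; after the forced $180^\circ$ folds are pushed through, $P_h$ collapses to one of finitely many small polyominoes carrying the two unit squares and a handful of slits, each of which can be checked, by the same opposite-faces argument, to be non-foldable.

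The main obstacle is exactly this analysis of $P_h$: the clean dichotomies of Sections~\ref{sec:3holesbound} and~\ref{sec:slitBound} each assume a single type of hole, so neither applies to $P_h$, which genuinely mixes two unit-square holes with slits. The crux is to show that, whatever global fold pattern a hypothetical folding of $P_h$ uses, breaking the chain of holes at any interior slit, or deleting an end cap, forces an uncovered pair of opposite faces. The heuristic for why this should hold is that successive slits sit one column apart in adjacent creases, so every ring-fold is pinned in place by its neighbours and there is no slack to compensate for a missing hole; turning this into a proof requires tracking the coverage crease by crease, and --- as with the case analyses underlying the figures of Section~\ref{sec:slitBound} --- a short finite case check seems unavoidable for the deleted-interior-slit case.
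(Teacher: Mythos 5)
Your reduction of the non-triviality half to ``$P_h$ is not foldable for each hole $h$'' is sound in principle (a folding of $P$ in which $h$ folds trivially induces a consistent onto facemapping of $P_h$), but the proposal stops exactly where the work begins. For every $h$, the polyomino $P_h$ still contains $2k$ holes of mixed type, with $k$ arbitrary, so the claim that it ``collapses to one of finitely many small polyominoes'' after pushing through forced $180^\circ$ folds is unjustified: which folds are forced depends on how the remaining $2k$ holes fold, and controlling that is precisely the content of the theorem. You correctly observe that the dichotomies of Sections \ref{sec:3holesbound} and \ref{sec:slitBound} do not apply to mixed hole types, but you do not supply a replacement. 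The paper instead analyses an arbitrary folding of $P$ directly: (i) if both unit squares fold trivially, the remaining slits are pairwise even-separated and Theorem \ref{thm:slitsnonfolding} applies; (ii) no slit can ring-fold, because a ring-fold forces both horizontal creases flanking a unit-square hole to fold $180^\circ$, incompatible with that hole folding non-trivially; (iii) of the four non-trivial foldings of the left unit square only one survives; and (iv) an induction along the strip (the $*$/$\star$ vertex argument of Figure \ref{fig:thm3FinalGood}) forces each successive slit to fold non-trivially into a flap and each vertical crease at a slit end to fold $90^\circ$. Nothing equivalent to step (iv) appears in your proposal, and it is the crux.

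There is also a concrete error in your foldability half: you describe the mechanism as ring-folding each slit so that a $90^\circ$ fold ``through its centre'' turns the sheet around. The paper proves the opposite --- ring-folding any slit of $P$ is impossible in a folding onto the cube (step (ii) above). In the exhibited folding the slits fold into flaps (the crease through the centre of each slit folds $180^\circ$), and the $90^\circ$ turns occur at the slit \emph{endpoints}, where the cut decouples the rows. Faithfully extending the facemapping of Figure \ref{fig:5holes} would produce the correct folding, but the mechanism as you describe it cannot be realised.
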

    \begin{proof}
        Note that the facemapping in Figure \ref{fig:5holes} can be extended to larger examples in an obvious (periodic) way. It is not hard to show (inductively) that this gives a valid folding by applying all indicated folds at vertical creases in a $4 \times 4$ region containing two slits and no other holes and noticing that this reduces the polyomino to one with two fewer slits.

        It remains to show that in any consistent facemapping onto the cube, every hole must fold non-trivially.

        To this end, first note that if 
        both unit square holes fold trivially, we are left with a polyomino that only has even-separated slit holes.
        By theorem \ref{thm:slitsnonfolding}, this will not fold onto the cube. Thus at least one square hole (say, the left one) folds non-trivially.

        Next we look at possible foldings of the slit holes. Examining the possible foldings of the square hole, we note that there is no non-trivial folding in which creases $a$ and $b$ in Figure \ref{fig:unboundedWithRing} are both folded $180^\circ$. 

        If one of the slits in crease $a$ is folded into a ring, then 
        by Lemma \ref{lem:fact3}, crease $a$ must fold $180^\circ$.
        Additionally, because the region of $P$ above any slit in crease $a$ is a $2\times 2$ rectangle,
        and because a ring-fold will force a vertical $90^\circ$ fold above it,
        crease $b$ must be folded $180^\circ$ too (via Lemma \ref{lem:rectangularSection}). Since $180^\circ$,  both $a$ and $b$ would be folded $180^\circ$ which is impossible.

        If one of the lower slits is folded into a ring, then the two columns directly above it contain no holes
        and so again, both rows must be folded
        $180^\circ$. Just like in the previous case,
        we therefore know that both creases $a$ and $b$ must be folded $180^\circ$.
        Thus in any folding onto the cube, all slits must either fold to make flaps, or fold trivially.

        \begin{figure}
            \centering
            \begin{tikzpicture}
                \fill[gray!10] (0, 0.1) rectangle (2.9, 3);
                \draw[gray, very thin] (0, 0.1) grid (2.9, 3);
                \draw[blue, very thick] (0, 0.1) -- (0, 3);
                \draw[blue, very thick] (0, 3) -- (2.9, 3);
                \draw[blue, very thick] (1, 1) rectangle (2, 2);
                \path (3, 1) node {a}
                    (3, 2) node {b};
            \end{tikzpicture}
            \caption{Unit holes when a slit is ring-folded}
            \label{fig:unboundedWithRing}
        \end{figure}


        Accounting for different rotations, there are four non-trivial ways of folding the left unit square hole. Our next goal is to show that only one of them has a chance of being extended to a folding of $P$ onto the cube.

        Let us call a vertical crease \emph{odd} if it goes through the centre of a slit (or if there is an even number of columns between it and the centre of a slit), and \emph{even} otherwise. All odd vertical creases which lie strictly between the two unit square holes are folded $180^\circ$, and thus all vertices in even vertical creases are mapped to the same vertex of the cube as the vertex which lies in the same horizontal crease and the even crease touching the left unit square hole.

        Note that any non-trivial folding of the left unit square hole covers all faces of the cube but one; let us denote this final face by $f$. 
        The image of any vertex on the inside of the left unit square hole does not border $f$ (otherwise the U-hole would have a folding onto the cube), and thus no face in the top three rows which borders an even crease can map to $f$.
        Faces in the rightmost column do not border an even crease, but as both $180^\circ$ and $90^\circ$ folds propagate horizontally through slits, the right unit square hole can be folded in precisely the same ways as it could if all slits were folded trivially. Therefore, no face in the top three rows can cover $f$.

        If the bottom vertical crease was folded $180^\circ$, then the same argument as above would show that no square in the lowest row can cover $f$. Thus, we conclude that the bottom vertical crease must be folded $90^\circ$. Applying Lemma \ref{lem:rectangularSection} to the bottom left $2 \times 2$ square, we see that the vertical creases in this square must be folded $90^\circ$, leaving exactly one possible non-trivial folding for the left unit square hole.

        \begin{figure}[htb]
            \centering
            \begin{tikzpicture}[scale=0.7]
                \fill[gray!10] (0, 0) rectangle (7.5, 4);
                \fill[white] (1, 2) rectangle (2, 3);

                \begin{scope}[shift={(0.5,3.5)}]
                \initcube\printcube
                \rollright\printcube
                \flipright\printcube
                \flipright\printcube
                \flipright\printcube
                \flipright\printcube
                \flipright\printcube
                \rolldown\printcube
                \rolldown\printcube
                \rollup
                \flipleft\printcube
                \flipleft\printcube
                \flipleft\printcube
                \flipleft\printcube
                \rolldown\printcube
                \flipright\printcube
                \flipleft\rollleft\printcube
                \rolldown\printcube
                \flipleft\printcube
                \rollup\printcube
                \rollup\printcube
                \end{scope}

                \foreach \i in {1,2,...,7}
                    \draw (\i,0)--(\i,4);
                \foreach \j in {1,2,3}
                    \draw (0,\j)--(7.5,\j);

                
                
                \draw[green, thick] (2, 1) -- (2, 0);
                \draw[green, thick] (4, 2) -- (4, 0);
                \draw[green, thick] (6, 2) -- (6, 0);
                
                \draw[blue, thick] (0, 0) -- (7.5, 0);
                \draw[blue, thick] (0, 0) -- (0, 4);
                \draw[blue, thick] (0, 4) -- (7.5, 4);
                
                \draw[blue, very thick] (1, 3) rectangle (2, 2);
                \draw[blue, very thick] (2, 1) -- (4, 1);
                \draw[blue, very thick] (4, 2) -- (6, 2);
                \draw[blue, very thick] (6, 1) -- (7.5, 1);


                \path (2.5, 0.5) node {a}
                    (3.5, 0.5) node {a}
                    (4.5, 0.5) node {b}
                    (5.5, 0.5) node {b}
                    (6.5, 0.5) node {c}
                    (4.5, 1.5) node {b}
                    (5.5, 1.5) node {b};
                \path (2, -0.2) node {$*$}
                    (4, -0.2) node {$\star$};
                
            \end{tikzpicture}
            \caption{final partial facemapping with $90^\circ$ fold}
            \label{fig:thm3FinalGood}
        \end{figure}
        The partial folding we obtain is shown in Figure \ref{fig:thm3FinalGood}. It remains to show that if we extend this folding, then each flap must fold non-trivially.
        We label the squares relevant to the first three slits a through $c$ as in Figure \ref{fig:thm3FinalGood}, and will use induction to show that each hole must be folded non-trivially. 
        
        Consider the $90^\circ$ folded crease directly above the leftmost
        green crease. We aim to show that if this crease is folded $90^\circ$, then the green crease $4$ columns to the left of it must also be folded $90^\circ$, and all slits in between these two creases must fold non-trivially.

        If the slit above the faces labelled $a$ was folded trivially, then
        we would have two $90^\circ$ folded creases intersecting with each 
        other in a trivial region, which is impossible.
        So the slit must fold non-trivially.

        Next, assume that the slit above the squares labeled $b$ was folded trivially.
        This would mean that both green creases directly below it would be folded $180^\circ$
        as the creases above flap $b$ are folded $180^\circ$. This would imply that the top two faces labelled $b$ would map to face $4$, and the bottom two to face $5$.

        Regardless of how the slit above the faces labelled $a$ is folded,
        the vertices marked "$*$" and "$\star$" in Figure \ref{fig:thm3FinalGood} must map to the same vertex on the cube.
        However, the square with the $*$ is folded to face $5$ from the edge of face $6$, and the square with the $\star$ is folded to face $5$ 
        from the edge of face $4$. This means that $*$ maps to the corner of $5$, $4$ and $1$, while $\star$ maps to $5$, $6$ and $3$ 
        This is a contradiction, and so the slit above the faces labelled $b$ is folded non-trivially too.

        Finally, observe that if both of the $b$ faces in the top row mapped to face $4$, then the square marked, we would arrive at the same contradiction as before regarding the vertices marked "$*$" and "$\star$". So the squares cannot map to face $4$ and so the green crease is folded $90^\circ$.
        This completes our induction step, showing that every such crease is folded $90^\circ$, and also that all slits must be folded non-trivially.

        Notice that the final two columns in the polyomino
        (containing the second unit square) occur after a slit in the lowest crease. If these two columns were folded trivially,
        then we can use the same logic as above to derive a contradiction, so the last unit hole must be folded non-trivially too.
    \end{proof}
    
\bibliography{sources}
\bibliographystyle{plain}

\end{document}